\newtheorem{corr}{Corollary}
\newtheorem{lem}{Lemma}
\newtheorem{theorem}{Theorem}
\def\BState{\State\hskip-\ALG@thistlm}
\begin{document}
\title{Improving the Coverage and Spectral Efficiency \\of Millimeter-Wave Cellular Networks\\ using Device-to-Device Relays}

\author{\IEEEauthorblockN{Shuanshuan Wu,~\IEEEmembership{Student Member,~IEEE}, Rachad Atat,~\IEEEmembership{Student Member,~IEEE}, \\
Nicholas Mastronarde,~\IEEEmembership{Senior Member,~IEEE}, and
Lingjia Liu,~\IEEEmembership{Senior Member,~IEEE}}

\thanks{S. Wu and N. Mastronarde are with the Dept. of Electrical Engineering, University at Buffalo, The State University of New York, Buffalo, NY 14260 USA (e-mail: \{shuanshu, nmastron\}@buffalo.edu).

R. Atat was with the Electrical Engineering and Computer Science Dept., University of Kansas, Lawrence, KS 66044 USA (e-mail: rachad@ittc.ku.edu).

L. Liu is with the Electrical and Computer Engineering Dept., Virginia Polytechnic Institute and State University, Blacksburg, VA 24061 USA (e-mail: ljliu@vt.edu).
}
}


\maketitle
\vspace{-1cm}
\begin{abstract}
\vspace{-10pt}
The susceptibility of millimeter waveform propagation to blockages limits the coverage of millimeter-wave (mmWave) signals. To overcome blockages, we propose to leverage two-hop device-to-device (D2D) relaying. Using stochastic geometry, we derive expressions for the downlink coverage probability of relay-assisted mmWave cellular networks when the D2D links are implemented in either uplink mmWave or uplink microwave bands. We further investigate the spectral efficiency (SE) improvement in the cellular downlink, and the effect of D2D transmissions on the cellular uplink. For mmWave links, we derive the coverage probability using dominant interferer analysis while accounting for both blockages and beamforming gains. For microwave D2D links, we derive the coverage probability considering both line-of-sight (LOS) and non-line-of-sight (NLOS) propagation. Numerical results show that downlink coverage and SE can be improved using two-hop D2D relaying. Specifically, microwave D2D relays achieve better coverage because D2D connections can be established under NLOS conditions. However, mmWave D2D relays achieve better coverage when the density of interferers is large because blockages eliminate interference from NLOS interferers. The SE on the downlink depends on the relay mode selection strategy, and mmWave D2D relays use a significantly smaller fraction of uplink resources than microwave D2D relays.
\end{abstract}

\vspace{-10pt}
\begin{IEEEkeywords}
\vspace{-10pt}
Millimeter Wave, Cellular, D2D, relay, coverage, spectral efficiency, stochastic geometry.
\end{IEEEkeywords}

\IEEEpeerreviewmaketitle

\section{Introduction}
\IEEEPARstart{M}{illimeter-wave} (mmWave) communication is attracting considerable attention from the scientific community and regulators for its potential to fulfill the ever increasing demands for mobile broadband access \cite{rappaport:mmwave_will_work}. In industry, the $\text{3}^\textup{rd}$ Generation Partnership Project (3GPP) has started a study item focusing on the performance and feasibility of wireless communication in the mmWave band (6-100 GHz) \cite{3gpp:mmwave_TR}. Moreover, in July 2016, the Federal Communications Commission (FCC) unanimously voted to open nearly 11 GHz of mmWave spectrum for 5G~\cite{fcc:mmwave}. However, due to its weak diffraction ability and severe penetration loss, mmWave communication is highly susceptible to blockages. For example, measurements on mmWave propagation show that tinted glass and brick pillars have high penetration losses of $40.1$ dB and $28.3$ dB, respectively, at a frequency of $28$ GHz \cite{zhao:28ghz}. While in indoor environments, mmWave links experience intermittent connectivity due to blockages from mobile human bodies~\cite{collonge:60ghz}. In other words, transmissions in mmWave networks require line-of-sight (LOS) paths, causing pronounced coverage holes.

A straightforward solution to overcome blockages in mmWave cellular networks is to deploy more mmWave base stations (BSs). Although mmWave systems are expected to leverage highly directional steerable beam antenna arrays to extend their transmission range and to reduce intercell interference from off-boresight directions~\cite{rappaport:mmwave_propagation}, significant interference may still be experienced in dense BS deployments. For example, the signal-to-interference-plus-noise ratio (SINR) at a user equipment (UE) that is located in the LOS path of a directional beam from an interfering BS will be severely degraded~\cite{niu:mmwave_survey}. The likelihood of this happening increases with the BS density, so interference management/coordination becomes a challenge in dense mmWave networks. The analytical and simulation results in this paper verify this intuition.

Another promising approach to extend coverage in mmWave cellular networks is to allow an intermediate relay node to forward traffic from a BS to a destination UE, which has poor links to nearby BSs. Due to the potential to route around blockages, using two-hop relay transmissions can improve mmWave coverage. In general, the relay transmission could be completed in the mmWave or microwave spectrum, and a relay node can either be deployed by an operator, e.g., a so-called \textit{infrastructure relay} in long term evolution (LTE) \cite{3gpp:relay}, or can be an idle UE that is used opportunistically. This latter case is attractive because it does not drastically change the network topology or infrastructure requirements. Since a UE serving as a relay connects to a destination UE via a device-to-device (D2D) link, we refer to it as a \textit{D2D relay}.

\subsection{Related Work}
D2D communications is already playing an important role in the unlicensed band via Wi-Fi Direct~\cite{wi-fi-direct}. However, its counterpart in the licensed band is far from being fully developed. A preliminary version of D2D communications called Proximity Service (ProSe)~\cite{lte-direct} is standardized in LTE-Advanced, but is mainly targeted for public safety use~\cite{feng:device}. Aside from public safety use cases, D2D communications can improve spectral efficiency in microwave cellular networks up to $4$ - $5\times$~\cite{wu:flashlinq}. Moreover, it opens up opportunities for social networking~\cite{corson:prose}, multicasting~\cite{zhou:intracluster}, machine type communications~\cite{pratas:underlay}, and D2D content distribution~\cite{golrezaei:cache}. There is also work focusing on D2D communications in mmWave spectrum~\cite{qiao:mmwave_d2d}.

D2D relaying to extend network coverage for public safety use is introduced in Release-13 LTE-Advanced~\cite{3gpp:prose,3gpp:36300}. In~\cite{mastronarde:to_relay}, we address the problem of incentivizing UEs to relay in commercial cellular networks. Currently, there is only limited research on D2D relay-assisted mmWave cellular communications~\cite{wu:mmwave,lin:connectivity,wei:mmwave_d2d_relay,turgut:relay_energy,xie:relay_assisted}.
In \cite{lin:connectivity}, the authors use stochastic geometry to analyze the connectivity of mmWave networks with multi-hop relaying. In particular, they derive the probability that there exists at least one path to route a directional mmWave signal from a source to a destination with the potential help of relays. They demonstrate that multi-hop relaying can improve connectivity, but do not analyze the coverage probability or spectral efficiency. Note that, unlike connectivity, coverage depends on the SINR. 
In~\cite{wei:mmwave_d2d_relay}, the authors investigate the probing and relay selection problem in two-hop relay-assisted mmWave cellular networks using an i.i.d. Bernoulli obstacle model and, in~\cite{turgut:relay_energy}, the authors analyze the energy efficiency of relay-assisted mmWave cellular networks. However, neither~\cite{wei:mmwave_d2d_relay} nor~\cite{turgut:relay_energy} analyze the coverage probability or spectral efficiency.
In~\cite{xie:relay_assisted}, the authors analyze the coverage probability of relay-assisted mmWave cellular networks assuming that the UE is associated with the nearest BS or, if the nearest BS is non-line-of-sight (NLOS), then it associates with the nearest relay. In contrast, we consider a two-hop relaying model in which mode selection depends on the BS-UE, BS-relay and relay-UE link signal qualities. 

In our recent work~\cite{wu:mmwave}, we took steps towards filling the gaps in prior research by analyzing the coverage probability of D2D relay-assisted mmWave cellular networks, where the D2D link operates in mmWave spectrum. In this paper, we extend~\cite{wu:mmwave} in the following directions: we analyze the network's coverage when the D2D link uses microwave spectrum; we analyze the coverage of noise-limited mmWave links; we derive the spectral efficiency (SE) of the downlink and analyze the effect of D2D transmission on the uplink resources; and we more fully explore the derived models through analytical and simulation results. 

In 3GPP evaluation scenarios, BSs are assumed to be arranged in hexagonal or square grids and system performance is evaluated using simulations. However, grid models are known to be too ideal because actual BS deployments are restricted by terrain and obstacles. Another shortcoming of grid BS models is their lack of tractability. Recently, stochastic geometry has gained popularity as a powerful tool for modeling/evaluating wireless networks \cite{chiu:stochastic_book,baccelli:stochastic_wireless_book} not only because it leads to a mathematically tractable framework, but also because positions of BSs in real networks often resemble point processes~\cite{lee:bs_position,andrews:tractable}, which lie at the heart of stochastic geometry. More specifically, in stochastic geometry analysis, spatial locations of network objects, e.g., BSs, UEs, and obstacles, are modeled as Poisson point processes (PPPs) and performance metrics are derived by averaging over their potential topological realizations~\cite{chiu:stochastic_book}. This provides a tractable approach for analyzing network performance \cite{andrews:tractable}.

\subsection{Contributions}
In this paper, we analyze the downlink performance of a two-hop relay-assisted mmWave cellular network using stochastic geometry. 
In the considered network, a downlink transmission is switched from direct cellular mode to D2D relay mode if there is an outage of the cellular link, but a relay UE is available that can help complete the transmission from the BS to the destination UE. For D2D transmissions (from relay UEs to destination UEs), both mmWave and microwave D2D are possible options. Intuitively, mmWave D2D is likely to achieve higher data rates due to the increased bandwidth available in the mmWave band, while microwave D2D is expected to achieve better coverage in dense blockage scenarios because of the better propagation properties in the microwave band. Therefore, we study and compare both D2D relay schemes. Here, the microwave band refers to the sub-6 GHz spectrum used in conventional cellular networks.

Our contributions are as follows:
\begin{itemize}
\item We derive the LOS probability between any two network nodes assuming cylindrical obstacles distributed according to a 2D homogeneous Poisson point process (PPP).
\item We show that the downlink coverage probability of a relay-assisted mmWave cellular network depends on the coverage of the direct cellular and D2D links, which are independent when D2D is deployed on the uplink cellular spectrum.
\item We derive the coverage probabilities of mmWave cellular and mmWave D2D links using dominant interferer analysis considering both blockages (based on the derived LOS probability model) and beamforming gains (obtained using square antenna arrays). 
\item We investigate the coverage of noise-limited mmWave links. The noise-limited assumption significantly simplifies the coverage probability model and, according to our numerical results, is accurate for lower BS densities and for higher obstacle densities.
\item We derive the coverage probability for microwave D2D links considering both multi-path fading and path loss. 
We use different path loss models for different LOS and NLOS propagation scenarios and derive the coverage probability using the law of total probability.
\item We derive the spectral efficiency of the relay-assisted mmWave cellular downlink. We also study the effect of D2D transmissions on the cellular uplink resources.
\item We validate our analytical results against simulations based on 3GPP network evaluation scenarios and channel models. We then explore the effect of different parameters on the performance of relay-assisted mmWave cellular networks. Our results demonstrate that two-hop D2D relays can improve coverage and SE across a variety of network configurations and that microwave D2D relays achieve higher performance gains than mmWave D2D relays, except under extremely dense BS deployments, i.e., dense interferers.
\end{itemize}

The remainder of this paper is organized as follows. In Section~\ref{sec:system_model}, we present the system model. In Section~\ref{sec:coverage_analysis}, we derive the coverage probability for two-hop relay-assisted mmWave communications using stochastic geometry. In Section~\ref{sec:spectral}, we analyze the SE improvement in the cellular downlink and the effect of D2D transmissions on the cellular uplink resources. In Section~\ref{sec:simulation}, we present our numerical results. 
We conclude in Section~\ref{sec:conclusion}.

\section{System Model}
\label{sec:system_model}

\subsection{Geometric Assumptions}
\subsubsection{Poisson point processes}

In this paper, we model the spatial locations of BSs, UEs, and obstacles as 2D homogeneous PPPs. A PPP defined in $\mathbb{R}^2$ is a random process in which the number of points $\Phi$ in a bounded Borel set $\mathcal{B} \subset \mathbb{R}^2$ has a Poisson distribution:
\begin{equation}
\mathbb{P}(\Phi(\mathcal{B})=k) = \frac{\Lambda^k}{k!} e^{-\Lambda}, k = 0, 1, 2, ...
\end{equation}
where $\Lambda = \lambda \mathsf{v}_2(\mathcal{B})$ is the expectation of the Poisson random variable for some intensity $\lambda$ and $\mathsf{v}_2(\cdot)$ denotes the Lebesgue measure in $\mathbb{R}^2$. If $\lambda$ is constant, the PPP is said to be homogeneous.

\subsubsection{Obstacles}
In general, obstacles have arbitrary shapes. In outdoor environments, major obstacles are buildings that can be approximated by rectangles or polygons. In indoor environments, human bodies are common obstacles that can be modeled as cylinders. In \cite{bai:coverage}, blockage effects of rectangular obstacles are investigated. In our analysis, to derive a single tractable LOS probability model for both outdoor and indoor scenarios, we model obstacles as cylinders distributed according to a 2D homogeneous PPP $\Phi_\text{o}$ with intensity $\lambda_\text{o}$. In Section~\ref{sec:los_validate}, we validate our decision to use this model for outdoor scenarios. 

\subsubsection{BSs and UEs}
We model the distribution of BSs as a 2D homogeneous PPP $\Phi_\text{b}$ with intensity $\lambda_\text{b}$. We assume that UEs are also distributed according to a 2D homogeneous PPP and that they are partitioned between active and idle UEs, where idle UEs are candidate relays. The candidate relay UEs can be determined by independent thinning on the set of all UEs. We use $\Phi_\text{r}$ and $\lambda_\text{r}$ to denote the point process and intensity of candidate relay UEs, respectively. Since the BSs, candidate relay UEs, and obstacles form homogeneous PPPs, we focus our analysis on a typical destination UE (in either the cellular or D2D link), which we assume is located at the origin $\boldsymbol{o}$. This is permissible in a homogeneous PPP by Slivnyak's theorem \cite{chiu:stochastic_book}. 

\subsection{MmWave Beamforming}
\label{sec:antenna_array}
As noted in the introduction, mmWave systems are expected to leverage highly directional beams to extend their transmission range \cite{rappaport:mmwave_will_work,roh:mmwave_beamforming}. In this paper, we consider a simple sectored antenna array model for both mmWave transmitters and receivers \cite{bai:coverage_2015}. In the mmWave links, $\phi_\text{b}$ and $\phi_\text{u}$ denote the half-power beamwidths of BSs and UEs, respectively. The main lobe gain and side lobe gain of the BSs are denoted by $Gm_\text{b}$ and $Gs_\text{b}$, respectively. For receiving UEs, the main and side lobe gains are denoted by $Gm_\text{u}$ and $Gs_\text{u}$, respectively. With the assumption of an $N \times N$ uniform planar square antenna array with half-wavelength antenna spacing, the beamwidth $\phi$, main lobe gain $G_m$, and side lobe gain $G_s$ are given as \cite{venugopal:d2d_mmwave}
\begin{equation}
\label{eq:antenna_parameter}
\begin{split}
\phi = 1.732/N, ~G_m = N^2, ~G_s = 1/\sin^2\left(\frac{3\pi}{2N}\right).
\end{split}
\end{equation}
The antenna array gain $g_i$ from an arbitrary BS $i$ to the typical cellular receiver is shown in~\eqref{eq:interference_gain_cell}. For simplicity, we denote the four possible antenna gains in \eqref{eq:interference_gain_cell} as $G_k$, $k=1,2,3,4$. Note that the beamforming gain on the desired cellular link is always the \textit{on-boresight gain} $G_1 = Gm_\text{b} Gm_\text{u}$. The antenna array gains on a D2D link can be acquired similarly. For tractability, we assume that an interfering transmitter's antenna boresight is uniformly distributed over $\left[0, 2\pi\right)$.
\begin{equation}
\label{eq:interference_gain_cell}
g_i = 
\begin{cases}
Gm_\text{b} Gm_\text{u}, &\text{with probability } \frac{\phi_\text{b}}{2\pi} \frac{\phi_\text{u}}{2\pi}, \\
Gm_\text{b} Gs_\text{u}, &\text{with probability } \frac{\phi_\text{b}}{2\pi} (1 -\frac{\phi_\text{u}}{2\pi}), \\
Gs_\text{b} Gm_\text{u}, &\text{with probability } (1-\frac{\phi_\text{b}}{2\pi}) \frac{\phi_\text{u}}{2\pi}, \\
Gs_\text{b} Gs_\text{u}, &\text{with probability } (1-\frac{\phi_\text{b}}{2\pi}) (1-\frac{\phi_\text{u}}{2\pi}).
\end{cases}
\end{equation}

\subsection{LOS Probability}
\label{sec:los_prob}
As mentioned earlier, we model obstacles as cylinders that are spatially distributed according to a 2D homogeneous PPP $\Phi_\text{o}$ with intensity $\lambda_\text{o}$. Let the random variable $H \in [h_\text{min}, h_\text{max}]$ denote the obstacle height with probability density function (PDF) $f_H(h)$, and let $h_\text{Tx}$ and $h_\text{Rx}$ denote the transmitter and receiver antenna heights, respectively. We have the following conclusion regarding the LOS probability between the two nodes.
\begin{lem}[LOS Probability]
\label{lem:los_prob}
\textup{
The LOS probability between two nodes separated by a distance $d$ on the  plane is $p_{\text{L}}(d) = \exp \big(-\eta \lambda_\text{o}(2 \mathbb{E}[R] d + \pi \mathbb{E}[R^2]) \big)$, where $\lambda_\text{o}$ is the obstacle intensity, the random variable $R$ denotes the obstacle radius, and $\eta = 1 - \int_0^1 \int_{h_\text{min}}^{s h_\text{Rx} + (1-s)h_\text{Tx}} f_H(h)\mathrm{d}h\mathrm{d}s$ is the obstacle thinning factor, which accounts for obstacle heights.
}
\end{lem}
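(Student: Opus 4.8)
The plan is to obtain $p_{\text{L}}(d)$ as the void probability of the point process of \emph{blocking} obstacles. Two nodes have an unobstructed path precisely when no cylinder of $\Phi_\text{o}$ occludes the segment joining them, so writing $N_{\text{blk}}$ for the number of obstacles that block this segment, $p_{\text{L}}(d) = \mathbb{P}(N_{\text{blk}} = 0)$. Because $\Phi_\text{o}$ is a homogeneous PPP and each obstacle carries i.i.d.\ marks (radius $R$ and height $H$) that are independent of its location, the blocking obstacles form an independently thinned PPP; hence $N_{\text{blk}}$ is Poisson and $p_{\text{L}}(d) = e^{-\mu}$ with $\mu = \mathbb{E}[N_{\text{blk}}]$. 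The proof then reduces to computing $\mu$.

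Next I would characterize blocking for a single cylinder. Project onto the ground plane and place the two nodes at the ends of a segment $\mathcal{S}$ of length $d$. A cylinder with centre $c$ and radius $r$ can occlude the tilted LOS ray only if its ground disk meets $\mathcal{S}$, i.e.\ $\mathrm{dist}(c,\mathcal{S}) \le r$; the set of such centres is a stadium-shaped region (a $d \times 2r$ rectangle capped by two half-discs of radius $r$) with area $2rd + \pi r^2$. If the disk crosses $\mathcal{S}$ at fractional position $s \in [0,1]$, the LOS ray there has height $h(s) = (1-s)h_\text{Tx} + s\,h_\text{Rx}$, so the cylinder truly blocks only if $H \ge h(s)$. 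Taking the crossing position to be uniform on $[0,1]$ along the corridor, the probability that an intersecting obstacle is tall enough equals $\int_0^1 \mathbb{P}(H > h(s))\,\mathrm{d}s = 1 - \int_0^1 \int_{h_\text{min}}^{s h_\text{Rx} + (1-s)h_\text{Tx}} f_H(h)\,\mathrm{d}h\,\mathrm{d}s = \eta$.

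Then I would assemble $\mu$ with Campbell's theorem for the marked PPP: $\mu = \lambda_\text{o}\, \mathbb{E}_R \big[ \int_{\mathbb{R}^2} \mathbb{P}(\text{cylinder at }c\text{ of radius }R,\ \text{height }H\text{ blocks }\mathcal{S})\,\mathrm{d}c \big]$. The indicator that the disk meets $\mathcal{S}$ confines $c$ to the stadium of area $2Rd + \pi R^2$, and over that region the conditional probability of being tall enough integrates --- over the rectangular corridor, and over the two half-disc caps treated the same way (there the ray height is essentially the antenna height) --- to the constant $\eta$. Pulling the moments out by independence of $R$ from the location then gives $\mu = \eta\, \lambda_\text{o}\, \mathbb{E}[2Rd + \pi R^2] = \eta\, \lambda_\text{o} (2\mathbb{E}[R] d + \pi \mathbb{E}[R^2])$, and $p_{\text{L}}(d) = e^{-\mu}$ is exactly the claimed formula.

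I expect the only genuinely delicate step to be the height bookkeeping: collapsing the 3D occlusion condition for a finite-radius cylinder into a single position-independent thinning factor $\eta$, and in particular handling the endpoints of $\mathcal{S}$ consistently with the corridor (there the LOS ray sits near $h_\text{Tx}$ and $h_\text{Rx}$). The remaining ingredients --- the void-probability / Poisson-thinning argument, the stadium area $2rd + \pi r^2$, and extracting $\mathbb{E}[R]$ and $\mathbb{E}[R^2]$ --- are routine once that reduction is granted, and it is precisely the independence of the height mark from both the radius and the centre location that makes the thinning factorize into the product form.
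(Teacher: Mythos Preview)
The paper does not actually give a proof of this lemma; it simply cites the authors' conference version~\cite{wu:mmwave}. So there is no in-paper argument to compare against. That said, your proposal follows exactly the standard route used for this class of results (and almost certainly the one in~\cite{wu:mmwave} and in the closely related rectangular-obstacle derivation of~\cite{bai:coverage}): identify the blocking obstacles as an independent thinning of the marked PPP $\Phi_\text{o}$, compute the void probability as $e^{-\mu}$, and evaluate $\mu$ as $\lambda_\text{o}$ times the expected area of the stadium $2\mathbb{E}[R]d + \pi\mathbb{E}[R^2]$, further thinned by the height factor $\eta$.

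Your own caveat is well placed: the step that is not entirely rigorous is folding the 3D height test into a single location-independent factor $\eta$. Strictly, a cylinder of radius $r$ intersects $\mathcal{S}$ over an interval rather than at a single fractional position $s$, and the two half-disc caps sit at $s\in\{0,1\}$ where the ray height is $h_\text{Tx}$ or $h_\text{Rx}$ rather than the average encoded in $\eta$. The lemma as stated applies the same $\eta$ to both the $2\mathbb{E}[R]d$ and the $\pi\mathbb{E}[R^2]$ terms, so it is itself a (mild) approximation in this respect; your derivation matches the claimed formula and the approximations it implicitly makes. Nothing further is needed.
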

\begin{proof}
The proof can be found in~\cite{wu:mmwave}.
\end{proof}

In the rest of this paper, we will express the LOS probability as:
\begin{equation}
\label{eq:los_prob_simplify}
p_{\text{L}}(d) = c e^{-\beta d},
\end{equation}
where $c = e^{-\eta\lambda_\text{o} \pi \mathbb{E}[R^2]}$ and $\beta = 2 \eta\lambda_\text{o} \mathbb{E}[R]$. Note that $\eta$ may be different for D2D and cellular links because they typically have different transmitter antenna heights. We denote these different parameters as $\eta_\text{D}$ and $\eta_\text{C}$, respectively.

\subsection{Channel Models}
\label{sec:channel_models}
In this paper, we consider the following path loss model for all link types (cellular and D2D) and for all spectrum bands (mmWave and microwave):
\vspace{-8pt}
\begin{equation*}
PL = A_1 \log_{10}(d) + A_2 + A_3 \log_{10}(f_c) + X \quad \text{(dB)},
\vspace{-8pt}
\end{equation*}
where $d$ (m) is the distance between the transmitter and receiver, $f_c$ (GHz) is the carrier frequency, $A_1$ includes the path loss exponent, $A_2$ is the intercept, $A_3$ describes the path loss frequency dependence, and $X$ is an environment-specific term, for example, it can be used to describe the wall attenuation. Path loss in linear scale can be expressed as $PL = A \cdot d^{\alpha}$, where $A = 10^{(A_2+X)/10} f_c^{A_3/10}$ and $\alpha = A_1/10$ is the path loss exponent. The values of $A_1$, $A_2$, $A_3$ and $X$ depend on the deployment scenario. In the remainder of this paper, we will use the notation $A_\text{L}$ and $A_\text{N}$ to differentiate between LOS and NLOS path losses on microwave D2D links.

We now describe the considered mmWave and microwave channel models in detail.

\subsubsection{MmWave Cellular and D2D Links}
Due to the properties of mmWave propagation, multi-path effects are negligible. For example, at 60 GHz, the channel closely matches an Additive White Gaussian Noise (AWGN) channel \cite{niu:mmwave_survey}. Consequently, we do not consider multi-path fading in our mmWave channel model. 
Interference experienced by a mmWave link comes from transmitters that have LOS paths to the receiver. On the downlink, interferers are BSs of neighboring cells. On the D2D link, interferers are cellular uplink users and D2D transmitters from other cells that happen to have been scheduled on the same frequency. Thus, the received signal power over a mmWave link of length $d$ is
\vspace{-8pt}
\begin{equation}
\label{eq:signal_model_mm}
P_\text{Rx} = B P_\text{Tx} G_\text{Tx} G_\text{Rx} A^{-1} d^{-\alpha},
\vspace{-8pt}
\end{equation}
where $G_\text{Tx}$ and $G_\text{Rx}$ denote the antenna array gains at the transmitter and receiver, respectively; $P_\text{Tx}$ denotes the transmit power; and $B$ is a Bernoulli random variable with parameter $p_\text{L}(d)$.

\subsubsection{Microwave D2D Links}
Multi-path effects play an important role in microwave signal propagation. For tractability, we assume microwave D2D signals experience Rayleigh fading such that the instantaneous channel power gain $h$ is exponentially distributed with PDF $f_h(x) = \mu e^{-\mu x}$ and mean $1/\mu$. In addition to Rayleigh fading, we consider both LOS and NLOS propagation by using different path loss models for each case. Note that in microwave communications, Rayleigh fading is used when there is no dominant LOS path and Rician fading is typically used when there is a dominant LOS path. However, since large scale fading contributes significantly more to the signal power attenuation, we incorporate the LOS/NLOS effect into the path loss model instead of complicating our analysis with Rician fading.\footnote{We have validated this approximation in several scenarios under different parameter configurations against simulations using Rician fading. We omit these results due to space limitations.} The received power over a microwave D2D link with length $d$ is modeled as $P_\text{Rx} = P_\text{Tx} h A^{-1} d^{-\alpha}$. 

\subsection{UE Association}
\label{sec:ue_association} 
To be consistent across the different types of links considered in this paper (i.e., mmWave cellular, mmWave D2D, and microwave D2D), we assume that the receiver UE always associates with the transmitter that has the smallest path loss. Here, ``transmitter'' refers to the mmWave BS on a cellular link or the relay UE on a D2D link.

Since we assume that mmWave transmission is only possible when a LOS path exists, the above criterion indicates that the UE always associates with the nearest LOS BS in cellular mode or the nearest LOS candidate relay UE in mmWave D2D relay mode. Let $d_{0,\text{L}}$ denote the distance between a typical UE and the nearest LOS transmitter and let $\lambda$ denote the intensity of candidate transmitters, i.e., $\lambda=\lambda_\text{b}$ for cellular links and $\lambda=\lambda_\text{r}$ for D2D links. The PDF of $d_{0,\text{L}}$ is given in the following lemma.
\begin{lem}[PDF of the distance $d_{0,\text{L}}$]
\label{lem:pdf_d0_mmwave}
\textup{
The PDF of the random variable $d_{0,\text{L}}$ is
\begin{equation}
\label{eq:pdf_d0_los}
f_{d_{0,\text{L}}}(d) = 2\pi\lambda d c e^{-\Lambda(d,\lambda) - \beta d},
\end{equation}
where $\Lambda(d,\lambda)$ is given as
\vspace{-8pt}
\begin{equation}
\label{eq:lambda_big_generic}
\Lambda(d,\lambda) =  \frac{2 \pi \lambda c}{\beta^2} ( 1 - e^{-\beta d} - \beta d e^{-\beta d} ).
\vspace{-8pt}
\end{equation}
}
\end{lem}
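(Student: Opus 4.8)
The plan is to realize the set of LOS candidate transmitters as an inhomogeneous PPP obtained by position-dependent thinning of $\Phi_\text{b}$ (for cellular links) or $\Phi_\text{r}$ (for D2D links), and then to read off the law of $d_{0,\text{L}}$ from the void probability of that thinned process.

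First I would invoke Lemma~\ref{lem:los_prob}: a candidate transmitter at distance $r$ from the typical UE at $\boldsymbol{o}$ has a LOS path with probability $p_\text{L}(r) = c e^{-\beta r}$. Assuming the LOS events of distinct candidate links are independent, the thinning theorem for PPPs implies that the LOS candidate transmitters form an inhomogeneous PPP $\Phi_\text{L}$ on $\mathbb{R}^2$ with intensity function $\lambda\, p_\text{L}(\|x\|) = \lambda c e^{-\beta\|x\|}$. Its mean number of points in the disk $b(\boldsymbol{o},d)$ of radius $d$ is, in polar coordinates,
\[
\Lambda(d,\lambda) = \int_0^{2\pi}\!\!\int_0^d \lambda c e^{-\beta r}\, r\, \mathrm{d}r\, \mathrm{d}\theta = 2\pi\lambda c \int_0^d r e^{-\beta r}\, \mathrm{d}r = \frac{2\pi\lambda c}{\beta^2}\big( 1 - e^{-\beta d} - \beta d e^{-\beta d} \big),
\]
which is exactly~\eqref{eq:lambda_big_generic} (the middle integral is evaluated by a single integration by parts).

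Since the UE associates with the nearest LOS transmitter, $d_{0,\text{L}} > d$ if and only if $\Phi_\text{L}(b(\boldsymbol{o},d)) = 0$; by the void probability of a PPP, $\mathbb{P}(d_{0,\text{L}} > d) = e^{-\Lambda(d,\lambda)}$, so the CDF is $F_{d_{0,\text{L}}}(d) = 1 - e^{-\Lambda(d,\lambda)}$. Differentiating and using $\tfrac{\mathrm{d}}{\mathrm{d}d}\Lambda(d,\lambda) = 2\pi\lambda d c e^{-\beta d}$ (immediate from the integral form of $\Lambda$ via the fundamental theorem of calculus) gives $f_{d_{0,\text{L}}}(d) = 2\pi\lambda d c\, e^{-\Lambda(d,\lambda) - \beta d}$, which is~\eqref{eq:pdf_d0_los}.

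The polar integral and the final differentiation are routine; the one genuine modeling subtlety — the step I would be most careful about — is the independence of LOS events across candidate links, which is what licenses both the thinning theorem and the PPP void-probability formula. I would flag this explicitly as the standard approximation it is (exact in the limit of vanishingly small obstacles, and the same simplification used throughout the stochastic-geometry mmWave literature), rather than attempting to prove it from the obstacle PPP directly.
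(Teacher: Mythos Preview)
Your derivation is correct and is precisely the standard thinning/void-probability argument: independently thin the candidate-transmitter PPP by $p_\text{L}(\cdot)$ to obtain an inhomogeneous PPP of LOS transmitters, compute its mean measure $\Lambda(d,\lambda)$ on the disk, and differentiate the survival function $e^{-\Lambda(d,\lambda)}$. The paper does not spell out its own proof here but defers to~\cite{wu:mmwave}; that reference uses exactly this approach, so your proposal matches.
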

\begin{proof}
The proof can be found in~\cite{wu:mmwave}.
\end{proof}

For microwave D2D links, the typical UE associates with the nearest LOS candidate relay UE or the nearest NLOS candidate relay UE, depending on which one provides the smallest path loss. Let $d_{0,\text{N}}$ denote the distance between the typical UE and the nearest NLOS candidate relay UE. The PDF of $d_{0,\text{N}}$ is given in the following lemma.
\begin{lem}[PDF of the distance $d_{0,\text{N}}$]
\textup{
The PDF of the random variable $d_{0,\text{N}}$ is
\vspace{-8pt}
\begin{equation}
\label{eq:pdf_d0_nlos}
f_{d_{0,\text{N}}}(d) = 2\pi\lambda d (1-c e^{-\beta d}) e^{\Lambda(d,\lambda) -\pi\lambda d^2},
\vspace{-8pt}
\end{equation}
where $\Lambda(d,\lambda)$ is defined in \eqref{eq:lambda_big_generic}.
}
\end{lem}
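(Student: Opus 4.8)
The plan is to obtain $f_{d_{0,\text{N}}}(d)$ from the void probability of the point process of NLOS candidate relays. By the LOS probability model \eqref{eq:los_prob_simplify}, a candidate relay located at distance $r$ from $\boldsymbol{o}$ is NLOS with probability $1-ce^{-\beta r}$, and — consistent with the independent-marking assumption already used to derive Lemma~\ref{lem:pdf_d0_mmwave} — I treat the LOS/NLOS statuses of distinct relays as mutually independent. Independent thinning of the homogeneous PPP $\Phi_\text{r}$ of intensity $\lambda$ then produces an inhomogeneous PPP of NLOS relays whose intensity at a point at distance $r$ from the origin is $\lambda\big(1-ce^{-\beta r}\big)$.

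Next I would write the complementary CDF of $d_{0,\text{N}}$ as the probability that this inhomogeneous PPP places no point in the disk of radius $d$ centered at $\boldsymbol{o}$, which by the PPP void-probability formula is
\begin{equation*}
\mathbb{P}(d_{0,\text{N}} > d) = \exp\!\left(-\int_0^d \lambda\big(1 - c e^{-\beta r}\big)\, 2\pi r \,\mathrm{d}r\right).
\end{equation*}
Splitting the integral gives $\int_0^d 2\pi\lambda r\,\mathrm{d}r = \pi\lambda d^2$ and $-2\pi\lambda c\int_0^d r e^{-\beta r}\,\mathrm{d}r$; evaluating the elementary integral $\int_0^d r e^{-\beta r}\,\mathrm{d}r = \beta^{-2}\big(1 - e^{-\beta d} - \beta d e^{-\beta d}\big)$ and matching it against \eqref{eq:lambda_big_generic} identifies the second term as $-\Lambda(d,\lambda)$, so that $\mathbb{P}(d_{0,\text{N}} > d) = e^{\Lambda(d,\lambda) - \pi\lambda d^2}$. (Equivalently, one can observe that $\mathbb{P}(d_{0,\text{L}} > d) = e^{-\Lambda(d,\lambda)}$ while the probability of no relay at all within distance $d$ is $e^{-\pi\lambda d^2}$, and divide, using that the LOS and NLOS thinnings are independent PPPs.)

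Finally I differentiate: $f_{d_{0,\text{N}}}(d) = -\frac{\mathrm{d}}{\mathrm{d}d}\,e^{\Lambda(d,\lambda) - \pi\lambda d^2}$. Since $\frac{\mathrm{d}}{\mathrm{d}d}\Lambda(d,\lambda) = 2\pi\lambda c\, d e^{-\beta d}$ and $\frac{\mathrm{d}}{\mathrm{d}d}\big(\pi\lambda d^2\big) = 2\pi\lambda d$, the exponent has derivative $-2\pi\lambda d\big(1 - c e^{-\beta d}\big)$, which immediately yields \eqref{eq:pdf_d0_nlos}. There is no serious analytical obstacle: everything reduces to one textbook integral and one derivative. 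The only delicate point is conceptual rather than computational — the true obstacle geometry correlates the LOS/NLOS status of different relays through shared blockers, so I would explicitly flag that I am invoking the same independence approximation underlying Lemma~\ref{lem:pdf_d0_mmwave} to keep the treatment self-consistent, and then sanity-check that the resulting $f_{d_{0,\text{N}}}$ is a legitimate density (nonnegative on $[0,\infty)$ and integrating to one, since $\mathbb{P}(d_{0,\text{N}} > d)\to 0$ as $d\to\infty$).
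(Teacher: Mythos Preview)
Your proposal is correct and follows precisely the route the paper intends: the paper omits the proof, noting only that it is ``similar to the proof of Lemma~\ref{lem:pdf_d0_mmwave},'' which is exactly the independent-thinning plus void-probability argument you carry out with the NLOS retention probability $1-ce^{-\beta r}$ in place of the LOS retention probability $ce^{-\beta r}$. Your explicit evaluation of the integral, identification with $\Lambda(d,\lambda)$, and differentiation are all correct, and your remark about the shared-blocker independence approximation is a useful caveat consistent with the paper's own modeling assumptions.
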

\begin{proof}
The proof is similar to the proof of Lemma~\ref{lem:pdf_d0_mmwave} and is omitted to save space.
\end{proof}

Based on the above discussion, operating in mmWave D2D relay mode requires the relay UE to have LOS paths to both the mmWave BS and the destination UE. On the other hand, to operate in microwave D2D relay mode, the relay UE must have a LOS path to the mmWave BS, but does not require a LOS path to the destination UE. Intuitively, we expect the latter case to achieve better coverage due to the relaxed requirements on the relay UEs. The three transmission modes investigated in this work are shown in Fig.~\ref{fig:tx_mode}. Note that we assume D2D transmissions share the cellular system's uplink spectrum resources, i.e., uplink mmWave spectrum or uplink microwave spectrum.\footnote{Deploying D2D communication in the uplink spectrum is known to provide higher spectral efficiency than in the downlink spectrum because there is relatively lower interference and it is often underutilized \cite{3gpp:36300} \cite{lin:d2d_overview}. Furthermore, we assume that the network employs either microwave D2D relays or mmWave D2D relays, but we do not consider the case that both D2D relay modes are deployed in the same cellular network at the same time. We plan to investigate the latter case in future research.} A list of frequently used notation is provided in Table~\ref{table:notations}.

\begin{figure*}
\centering
\includegraphics[width=0.5\textwidth]{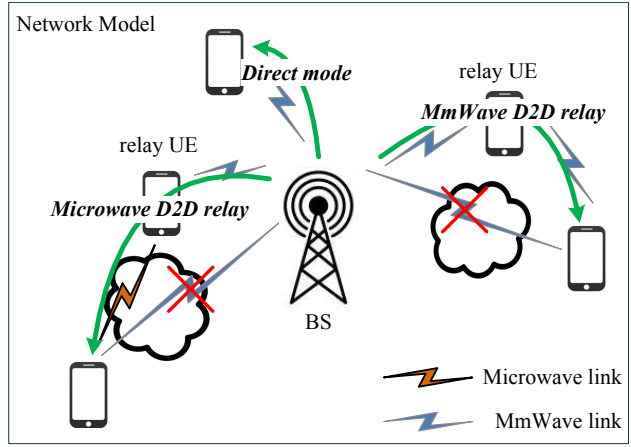}
\caption{\label{fig:tx_mode} Three transmission modes in the D2D relay-assisted cellular network: cellular/direct mode, mmWave D2D relay mode, and microwave D2D relay mode.}
\end{figure*}

\begin{table}
\small
\vspace{0.7in}
\caption{\label{table:notations}List of abbreviated notation}
\newcommand{\tabincell}[2]{\begin{tabular}{@{}#1@{}}#2\end{tabular}}
\centering 
\begin{tabular}{|c|l|}
\hline
\textbf{Notation} & \textbf{Description} \\
\hline
$\Phi_\text{b}$, $\Phi_\text{o}$, $\Phi_\text{r}$ & PPPs of BSs, obstacles, and candidate relay UEs\\
\hline
$\lambda_\text{b}$, $\lambda_\text{o}$, $\lambda_\text{r}$ & Intensities of BS, obstacle, and candidate relay UE PPPs\\
\hline
$p_\text{L}(x)$ & LOS probability of a link with distance $x$\\
\hline
$p_\text{c}(\tau)$, $p_\text{c,C}(\tau)$, $p_\text{c,D}(\tau)$, $p_\text{c,R}(\tau)$ & Overall, cellular, D2D, and relay coverage probabilities at SINR threshold $\tau$\\
\hline
$R$, $H$ & Random variables of obstacle radius and height\\
\hline
$d_{0,\text{L}}$, $f_{d_{0,\text{L}}}(x)$ & Distance from the typical UE to the nearest LOS transmitter and its PDF\\
\hline
$d_{0,\text{N}}$, $f_{d_{0,\text{N}}}(x)$ & Distance from the typical UE to the nearest NLOS transmitter and its PDF\\
\hline
$\boldsymbol{B}(\boldsymbol{o}, d)$ & A disc with radius $d$ and origin $\boldsymbol{o}$\\
\hline
\end{tabular}
\vspace{-0.4in}
\end{table}

\section{Coverage Probability Analysis}
\label{sec:coverage_analysis}

Let $p(\tau) \triangleq \mathbb{P}(\text{SINR} > \tau)$ denote the coverage probability in terms of outage SINR threshold $\tau$. In the considered D2D relay-assisted cellular network, the downlink transmission switches from cellular to D2D relay mode when the direct cellular link has SINR smaller than an outage threshold $\tau$ and there exists an idle UE such that both the cellular link (from the BS to the idle UE) and the D2D link (from the idle UE to the destination UE) have SINRs above $\tau$. On the other hand, an outage occurs if the direct cellular link experiences an outage and such an idle UE does not exist. It follows that the overall downlink coverage probability can be expressed as $p_\text{c}(\tau) = 1- [1-p_\text{c,C}(\tau)][1-p_\text{c,R}(\tau)]$, 
where $p_\text{c,C}(\tau)$ is the coverage probability of the cellular downlink and $p_\text{c,R}(\tau)$ is the relay-assisted two-hop coverage probability. Because the cellular downlink and D2D transmissions are completed in orthogonal frequency bands, i.e., one is on the downlink spectrum and one is on the uplink spectrum, we model the coverage of each link as independent\footnote{Note that in practice, cellular and D2D link coverage may be correlated because the two links may share obstacle(s). For simplicity, we ignore this source of correlation. However, we will consider it in future work.}; therefore, $p_\text{c,R}(\tau) = p_\text{c,C}(\tau) p_\text{c,D}(\tau)$, where $p_\text{c,C}(\tau)$ and $p_\text{c,D}(\tau)$ are the coverage probabilities of the cellular downlink and D2D link, respectively. It follows that the overall downlink coverage probability in a D2D relay-assisted cellular network can be expressed as
\begin{equation}
\label{eq:cov_prob}
\begin{split}
p_\text{c}(\tau) =& 1- [1-p_\text{c,C}(\tau)][1-p_\text{c,C}(\tau) p_\text{c,D}(\tau)] \\
=& p_\text{c,C}(\tau) [1+p_\text{c,D}(\tau)] - p_\text{c,C}^2(\tau) p_\text{c,D}(\tau).
\end{split}
\end{equation}

We derive the coverage probability of the mmWave cellular downlink, $p_\text{c,C}(\tau)$, in Section~\ref{sec:coverage_cellular}. Subsequently, we derive the coverage probability, $p_\text{c,D}(\tau)$, for mmWave D2D links in Section~\ref{sec:coverage_d2d_mmwave} and for microwave D2D links in Section~\ref{sec:coverage_d2d_microwave}. 

\subsection{Coverage Probability of MmWave Cellular Links}
\label{sec:coverage_cellular}

The desired signal at the mmWave receiver will experience interference if there is a LOS path to an interfering BS. However, the interference power depends on the distance and the antenna boresights of both the receiver and the interferer. The SINR at a typical receiver is:
\begin{equation}
\label{eq:sinr_definition}
\text{SINR}_0 = \frac{g_0 A^{-1} d_0^{-\alpha}}{\sigma^2 + \sum_{i>0} B_i g_i A^{-1} d_i^{-\alpha}},
\end{equation}
where $d_0$ is the distance between the typical receiver and the serving BS, which is distributed according to the PDF in \eqref{eq:pdf_d0_los}, $B_i$ is a Bernoulli random variable with parameter $p_\text{L}(d)$, and $d_i$ is the distance between the receiver and BS $i$. The transmit power is incorporated in $\sigma^2$. 

As discussed earlier a typical UE always associates to the nearest LOS BS, which has the smallest path loss. Therefore, all other BSs with LOS paths to the typical UE are interferers that are farther from the typical UE than its associated BS. We partition these interferers into two subsets: \textit{dominant} and \textit{non-dominant} interferers \cite{weber:overview}. A dominant interferer can cause an outage at the receiver, whereas a non-dominant interferer only contributes marginally to the interference.

Let $I = \sum_{i>0} B_i g_i A^{-1} d_i^{-\alpha}$ denote the aggregate interference at the typical receiver. Given the SINR threshold $\tau$ and the distance $d_0$ between the typical UE and its nearest LOS BS, the coverage probability is:
\begin{equation}
\label{eq:cov_mm_d2d_bound}
\begin{split}
p_\text{c}(\tau, d_0) &= \mathbb{P} \left( g_0 A^{-1} d_0^{-\alpha}/\left(\sigma^2+I\right) > \tau \right) \\
&= \mathbb{P} \left(I < \left(g_0 A^{-1} d_0^{-\alpha} - \tau \sigma^2 \right)/\tau \right) \\
& \leq \mathbb{P} \left(d_i^{-\alpha} < \left(g_0 d_0^{-\alpha} - \tau A \sigma^2\right)/\left(g_i \tau\right) \right) \\
&= \mathbb{P} \left(d_i > \left( g_i \tau /\left(g_0 d_0^{-\alpha} - \tau A \sigma^2 \right) \right) ^{1/\alpha} \right).\\
\end{split}
\end{equation}
In \eqref{eq:cov_mm_d2d_bound}, the inequality follows from the fact that the third line only considers the interference from BS $i$ rather than the aggregate interference. We denote by $D_\text{I}(\tau, d_0, g_i) = \left( \frac{g_i \tau}{g_0 d_0^{-\alpha} - \tau A \sigma^2} \right) ^{1/\alpha}$ the distance from the typical UE to the boundary of a region around the typical UE where dominant interferers can exist. We refer to this bounded region as the \textit{interference region (IR)}.

As we have seen in Section~\ref{sec:antenna_array}, the antenna array gain $g_i$ on an interfering link depends on the boresights of antenna arrays at both the interferer and the receiver, which means that the boundary of the IR varies with the direction. To be more specific, $D_\text{I}(\tau, d_0, g_i)$ can take four different values since there are four possible antenna array gains:
\begin{equation}
\label{eq:radius_ir}
\begin{split}
D_k(\tau, d_0) = \left( \frac{ G_k \tau}{g_0 d_0^{-\alpha} - \tau A \sigma^2} \right) ^{1/\alpha}, \ k = 1,2,3,4,
\end{split}
\end{equation}
where $G_k$ is given in \eqref{eq:interference_gain_cell}. An IR is shown in Fig.~\ref{fig:interference_region_cellular}. Note that since interferers are farther than the serving BS, the IR actually excludes a disc $\boldsymbol{B}(\boldsymbol{o},d_0)$ centered at the typical UE. 

\begin{figure}
\centering
\includegraphics[width=0.5\textwidth]{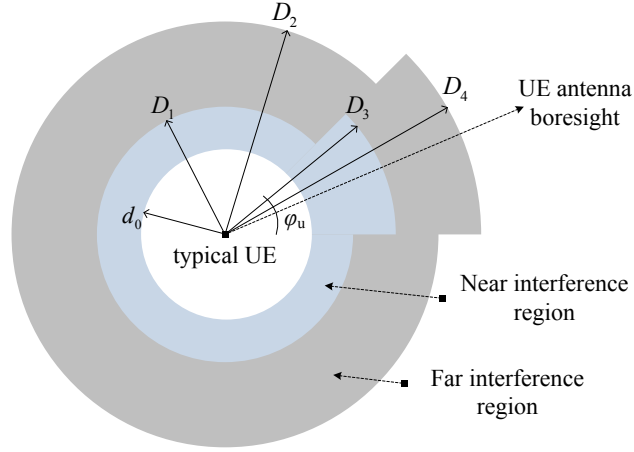}
\caption{\label{fig:interference_region_cellular}Interference region with antenna array gain considered.}
\end{figure} 

According to (\ref{eq:radius_ir}), we further partition the IR into two parts: the \textit{near interference region (NIR)} and the \textit{far interference region (FIR)}. All LOS interferers in the NIR are dominant interferers. However, in the FIR, only LOS interferers with their main lobes towards the typical UE are considered dominant. We thus have the following result.

\begin{theorem}[MmWave cellular coverage probability]
\label{theorem:cov_mmwave_cellular}
\textup{
Given the outage SINR threshold $\tau$, the coverage probability of a mmWave cellular link, $p_\text{c,C}(\tau)$, is given by
\begin{equation}
\label{eq:cov_mmwave_cellular}
p_\text{c,C}(\tau) = 2\pi\lambda_\text{b} c \int_{x>0} x e^{ -\Lambda_\text{b}^{\text{(N)}}(\tau, x) -\Lambda_\text{b}^{\text{(F)}}(\tau, x) -\Lambda(x,\lambda_\text{b}) -\beta x}\mathrm{d}x,
\end{equation}
where $\Lambda(x,\lambda_\text{b})$ is defined in \eqref{eq:lambda_big_generic}, and $\Lambda_\text{b}^{\text{(N)}}(\tau, x)$ and $\Lambda_\text{b}^{\text{(F)}}(\tau, x)$ are given as:
\begin{equation*}
\begin{split}
\Lambda_\text{b}^{\text{(N)}}(\tau, x) =& \frac{ \lambda_\text{b} c}{\beta^2} \Big( \phi_\text{u} \big( \beta x e^{-\beta x} - \beta D_3 e^{-\beta D_3} + e^{-\beta x} - e^{-\beta D_3} \big) \\
& + (2\pi -\phi_\text{u}) \big( \beta x e^{-\beta x} - \beta D_1 e^{-\beta D_1} + e^{-\beta x} - e^{-\beta D_1} \big) \Big),\\
\Lambda_\text{b}^{\text{(F)}}(\tau, x) =& \frac{ \phi_\text{b} \lambda_\text{b} c}{2\pi \beta^2} \Big( \phi_\text{u} \big( \beta D_3 e^{-\beta D_3} - \beta D_4 e^{-\beta D_4} + e^{-\beta D_3} - e^{-\beta D_4} \big) \\
& + (2\pi -\phi_\text{u}) \big( \beta D_1 e^{-\beta D_1} - \beta D_2 e^{-\beta D_2} + e^{-\beta D_1} - e^{-\beta D_2} \big) \Big),\\
\end{split}
\end{equation*}
in which $D_k$, $k=1, 2, 3, 4$, is shorthand for $D_k(\tau, x)$ as defined in \eqref{eq:radius_ir}.
}
\end{theorem}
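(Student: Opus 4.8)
The plan is to condition on the distance $d_0 = x$ between the typical UE and its serving BS (the nearest LOS BS, whose law is given by Lemma~\ref{lem:pdf_d0_mmwave}), evaluate the conditional coverage probability by a dominant-interferer / void-probability argument, and then average over $x$. First I would make the bound in~\eqref{eq:cov_mm_d2d_bound} precise at the level of the whole interferer process: call an interfering BS at distance $r$ with antenna array gain $G_k$ a \emph{dominant interferer} if $r < D_k(\tau,x)$, i.e., if that BS alone would drive the SINR below $\tau$. Since the presence of any dominant interferer forces $I \ge B_i g_i A^{-1} d_i^{-\alpha} > (g_0 A^{-1} x^{-\alpha} - \tau\sigma^2)/\tau$ and hence an outage, we have $\{\mathrm{SINR}_0 > \tau\} \subseteq \{\text{no dominant interferer}\}$; the dominant-interferer analysis replaces this inclusion by the approximation $p_\text{c,C}(\tau,x) \approx \mathbb{P}(\text{no dominant interferer in the IR}\mid d_0 = x)$, the aggregate of the non-dominant interferers being comparatively small in the blockage-limited regime so that the bound is essentially tight.

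Next I would identify the point process of dominant interferers. Conditioned on $d_{0,\text{L}} = x$, the LOS BSs outside the disc $\boldsymbol{B}(\boldsymbol{o},x)$ form, by the independence of $\Phi_\text{b}$ and $\Phi_\text{o}$ together with the thinning theorem, an inhomogeneous PPP of intensity $\lambda_\text{b}\, p_\text{L}(r) = \lambda_\text{b} c\, e^{-\beta r}$ on $\{r > x\}$ (Slivnyak's theorem places the typical UE at $\boldsymbol{o}$). Independently tagging each such BS with its antenna-gain class using the probabilities of~\eqref{eq:interference_gain_cell} and keeping only the ones satisfying $r < D_k(\tau,x)$ is a further independent thinning, so the dominant interferers again form a PPP. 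Its void probability is $\exp(-\Lambda_\text{dom}(\tau,x))$, where $\Lambda_\text{dom}(\tau,x)$ is the intensity measure of the retained points, and this equals the conditional coverage approximation from the previous step.

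To evaluate $\Lambda_\text{dom}(\tau,x)$ I would use the NIR/FIR decomposition of the IR: in the NIR every LOS BS is dominant, whereas in the FIR only LOS BSs whose transmit main lobe points at the UE are dominant, which brings in the extra factor $\phi_\text{b}/2\pi$; in each region the outer radius is one of the $D_k(\tau,x)$ from~\eqref{eq:radius_ir}, chosen according to whether the UE sees the interferer through its main lobe (probability $\phi_\text{u}/2\pi$) or its side lobe. Each piece then reduces to $\int 2\pi\lambda_\text{b} c\, r\, e^{-\beta r}\,\mathrm{d}r$ over an annulus, which I evaluate with $\int r\, e^{-\beta r}\,\mathrm{d}r = -\beta^{-2}(\beta r + 1)e^{-\beta r}$; collecting terms yields $\Lambda_\text{dom}(\tau,x) = \Lambda_\text{b}^{\text{(N)}}(\tau,x) + \Lambda_\text{b}^{\text{(F)}}(\tau,x)$. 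Finally, deconditioning via $p_\text{c,C}(\tau) = \int_{x>0} p_\text{c,C}(\tau,x)\, f_{d_{0,\text{L}}}(x)\,\mathrm{d}x$ and substituting~\eqref{eq:pdf_d0_los} reproduces~\eqref{eq:cov_mmwave_cellular}, the factors $2\pi\lambda_\text{b} c\, x$, $e^{-\Lambda(x,\lambda_\text{b})}$ and $e^{-\beta x}$ coming straight from the PDF of $d_{0,\text{L}}$.

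I expect the main obstacle to be the combinatorial bookkeeping in evaluating $\Lambda_\text{dom}$: correctly pairing each angular sub-case (receiver main vs.\ side lobe, interferer main vs.\ side lobe) with the right critical radius $D_k(\tau,x)$ and annulus limits, accounting for the ordering of the $D_k$, and ensuring that no dominant interferer is double counted. One should also keep in mind that $D_k(\tau,x)$ is meaningful only when $g_0 A^{-1} x^{-\alpha} > \tau\sigma^2$, i.e., when the link is not already in outage due to noise alone (in which case $p_\text{c,C}(\tau,x) = 0$), which implicitly restricts the effective range of the $x$-integral. The remaining ingredients, namely Slivnyak's theorem, thinning of a PPP, and the void-probability formula, are standard.
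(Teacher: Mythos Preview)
Your proposal is correct and follows essentially the same approach as the paper: condition on the serving distance $d_0=x$, thin $\Phi_\text{b}$ by the LOS probability to obtain an inhomogeneous PPP of LOS interferers on $\{r>x\}$, further thin by the antenna-gain / NIR--FIR criteria to isolate the dominant interferers, take the void probability $e^{-\Lambda_\text{b}^{\text{(N)}}-\Lambda_\text{b}^{\text{(F)}}}$, and decondition against $f_{d_{0,\text{L}}}$. Your treatment is in fact slightly more explicit than the paper's (you spell out the upper-bound nature of the approximation, the noise-only outage caveat on the range of $x$, and the standard stochastic-geometry tools invoked), but the argument is the same.
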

\begin{proof}
The proof is provided in Appendix~\ref{app:proof_cov_mmwave_cellular}.
\end{proof}

Note that \eqref{eq:cov_mmwave_cellular} provides an upper bound on the coverage probability of PPP-based BS models because dominant interferer analysis uses a lower bound on the interference. We expect that the upper bound is tight for mmWave because the LOS probability exponentially decreases with distance and square antenna arrays reject interference from off-boresight directions,
thereby reducing the effective number of distant interferers. We validate this in Section V.

\subsection{Coverage Probability of MmWave D2D Links}
\label{sec:coverage_d2d_mmwave}

We still use dominant interferer analysis to derive the coverage probability of mmWave D2D links. However, to account for interference, we introduce a multiplexing factor $\rho$ such that uplink inteferers can be approximated by a homogeneous PPP with intensity $\rho\lambda_\text{b}$~\cite{wu:mmwave}.\footnote{Note that uplink interferers do not necessarily form a homogeneous PPP if the BSs form a homogeneous PPP \cite{andrews:mmwave}. However, it has been shown that the dependencies between scheduled UEs in the uplink and their associated BSs are very weak~\cite{novlan:uplink}.} We have the following result. The proof is omitted because it is similar to that of Theorem~\ref{theorem:cov_mmwave_cellular}.
\begin{theorem}[MmWave D2D coverage probability]
\label{theorem:cov_mmwave_d2d}
\textup{
Given the outage SINR threshold $\tau$, the coverage probability of a mmWave D2D link is given by
\begin{equation}
\label{eq:coverage_mmwave_d2d}
p_\text{c,D}(\tau) = 2\pi\lambda_\text{r} c \int_{x>0} x e^{ -\Lambda_\text{u}^\text{(N)}(\tau, x) - \Lambda_\text{u}^\text{(F)}(\tau, x) -\Lambda(x,\lambda_\text{r}) - \beta x } \mathrm{d}x,
\end{equation}
where $\lambda_\text{r}$ is the intensity of candidate relay UEs, $\Lambda(x,\lambda_\text{r})$ is defined in \eqref{eq:lambda_big_generic}, and $\Lambda_\text{u}^{\text{(N)}}(\tau, x)$ and $\Lambda_\text{u}^{\text{(F)}}(\tau, x)$ are defined as:
\begin{equation*}
\begin{split}
\Lambda_\text{u}^{\text{(N)}}(\tau, d_0) =& \frac{ \rho \lambda_\text{b} c}{\beta^2} \Big( \phi_\text{u} \big( 1 - \beta D_3 e^{-\beta D_3} - e^{-\beta D_3} \big) + (2\pi -\phi_\text{u}) \big( 1 - \beta D_1 e^{-\beta D_1} -e^{-\beta D_1} \big) \Big), \\
\Lambda_\text{u}^{\text{(F)}}(\tau, d_0) =& \frac{ \phi_\text{b} \rho \lambda_\text{b} c}{2\pi \beta^2} \Big( \phi_\text{u} \big( \beta D_3 e^{-\beta D_3} - \beta D_4 e^{-\beta D_4} + e^{-\beta D_3} - e^{-\beta D_4} \big) \\
& + (2\pi -\phi_\text{u}) \big( \beta D_1 e^{-\beta D_1} - \beta D_2 e^{-\beta D_2} + e^{-\beta D_1} - e^{-\beta D_2} \big) \Big),\\
\end{split}
\end{equation*}
in which $D_k$ is shorthand for $D_k(\tau, x)$ as defined in \eqref{eq:radius_ir}, but using the array gains $G_k$ of the D2D links.
}
\end{theorem}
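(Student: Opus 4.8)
The plan is to follow the proof of Theorem~\ref{theorem:cov_mmwave_cellular} closely, changing only what distinguishes a mmWave D2D link from a mmWave cellular link. There are three such changes. First, the serving node is the nearest LOS candidate relay UE rather than the nearest LOS BS, so I would condition on the serving distance $d_0 = d_{0,\text{L}}$ and use its density from Lemma~\ref{lem:pdf_d0_mmwave} evaluated at $\lambda = \lambda_\text{r}$, i.e.\ \eqref{eq:pdf_d0_los}--\eqref{eq:lambda_big_generic} with $\lambda_\text{r}$ in place of $\lambda_\text{b}$. Second, the interfering transmitters are the uplink users sharing the D2D band; invoking the multiplexing-factor approximation, I would model them as a homogeneous PPP of intensity $\rho\lambda_\text{b}$, which is the only reason $\rho\lambda_\text{b}$ replaces $\lambda_\text{b}$ in the exponents. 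Third, the antenna array gains $G_k$ of \eqref{eq:interference_gain_cell}, and hence the interference-region radii $D_k(\tau,d_0)$ of \eqref{eq:radius_ir}, are those of the D2D configuration.

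With $d_0$ held fixed, I would reuse the dominant-interferer bound \eqref{eq:cov_mm_d2d_bound}: the conditional coverage probability $p_\text{c,D}(\tau,d_0)$ is upper bounded by the probability that no LOS uplink interferer falls inside its own direction-dependent interference region, since retaining a single interferer underestimates the aggregate interference $I$. Exactly as in the cellular case, the interference region splits into a near part (NIR), in which every LOS interferer is dominant regardless of its boresight, and a far part (FIR), in which only LOS interferers whose main lobe points at the typical UE are dominant; the two are delimited by the radii $D_k(\tau,d_0)$ according to which of the gains $G_k$ applies in a given angular sector of the receiver beam and whether the interferer contributes its main- or side-lobe gain. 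The LOS dominant interferers then form an inhomogeneous PPP obtained by thinning the intensity-$\rho\lambda_\text{b}$ uplink PPP by $p_\text{L}(r) = c e^{-\beta r}$ (from \eqref{eq:los_prob_simplify}) and, in the FIR, additionally by the main-lobe probability $\phi_\text{b}/2\pi$; by the void probability of a PPP, $p_\text{c,D}(\tau,d_0) = \exp(-\Lambda_\text{u}^{\text{(N)}}(\tau,d_0) - \Lambda_\text{u}^{\text{(F)}}(\tau,d_0))$, where the two exponents are the intensity measures of the NIR and FIR, evaluated in polar coordinates using only the elementary primitive $\int r e^{-\beta r}\,\mathrm{d}r = -\beta^{-2}(\beta r + 1)e^{-\beta r}$ and the angular weights $\phi_\text{u}$ and $2\pi-\phi_\text{u}$ of the receiver's main- and side-lobe sectors.

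The one place where the cellular argument cannot simply be copied -- and the step I expect to be the crux -- is the \emph{inner} radius of the NIR integrals. In Theorem~\ref{theorem:cov_mmwave_cellular} the interferers are themselves BSs, so conditioning on the nearest LOS BS being at $d_0$ excludes LOS interferers from $\boldsymbol{B}(\boldsymbol{o},d_0)$ and the NIR radial integrals run from $d_0$; here the uplink interferer process is an independent thinning of the UE process, disjoint from the candidate-relay process, so conditioning on the serving relay places no constraint on it and the NIR integrals run from $0$. This is precisely what replaces the $(\beta d_0 + 1)e^{-\beta d_0}$ endpoint terms of $\Lambda_\text{b}^{\text{(N)}}$ by the constant $1$ in $\Lambda_\text{u}^{\text{(N)}}$, while leaving $\Lambda_\text{u}^{\text{(F)}}$ identical in form to $\Lambda_\text{b}^{\text{(F)}}$ up to $\lambda_\text{b} \to \rho\lambda_\text{b}$, since the FIR annuli never intersect the serving disc. (As in the cellular case, one reads the conditional coverage as $0$ whenever $g_0 d_0^{-\alpha} \le \tau A\sigma^2$, so that the radii in \eqref{eq:radius_ir} are well defined.)

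Finally, I would decondition on $d_0$, writing $p_\text{c,D}(\tau) = \int_{0}^{\infty} f_{d_{0,\text{L}}}(x)\,\exp(-\Lambda_\text{u}^{\text{(N)}}(\tau,x) - \Lambda_\text{u}^{\text{(F)}}(\tau,x))\,\mathrm{d}x$ and substituting $f_{d_{0,\text{L}}}(x) = 2\pi\lambda_\text{r} x c\, e^{-\Lambda(x,\lambda_\text{r}) - \beta x}$ to obtain \eqref{eq:coverage_mmwave_d2d}. As with \eqref{eq:cov_mmwave_cellular}, this is an upper bound because the dominant-interferer analysis lower-bounds the interference, and I would expect it to be tight for the same reasons -- the exponential decay of $p_\text{L}$ and the off-boresight rejection of the square arrays -- which I would confirm against simulation.
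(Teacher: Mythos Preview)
Your proposal is correct and follows exactly the approach the paper intends: the paper omits the proof, stating only that it is similar to that of Theorem~\ref{theorem:cov_mmwave_cellular}, and your three substitutions (serving process $\lambda_\text{b}\to\lambda_\text{r}$, interferer intensity $\lambda_\text{b}\to\rho\lambda_\text{b}$, D2D array gains) together with the crucial observation that the NIR radial integrals start at $0$ rather than $d_0$---because the uplink interferer process is independent of the candidate-relay process---are precisely what turn the $(\beta d_0+1)e^{-\beta d_0}$ endpoint terms of $\Lambda_\text{b}^{\text{(N)}}$ into the constant $1$ appearing in $\Lambda_\text{u}^{\text{(N)}}$. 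You have made explicit the one structural difference the paper leaves to the reader.
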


\subsection{Coverage Probability of Noise-Limited MmWave Links}
\label{sec:noise_limited}

Recent research suggests that mmWave networks are more likely to be noise-limited than interference-limited due to the blockage effect \cite{Rangan:mmwave_noise_limited,Akdeniz:mmwave_evaluation}. With the noise-limited assumption, a mmWave link with distance $d_0$ experiences an outage if the signal-to-noise ratio (SNR) at the receiver is below a given SNR outage threshold $\tau$. Therefore, the coverage probability is
\begin{equation*}
\begin{split}
p_\text{c}(\tau) = \mathbb{P} \left[ \frac{g_0 A^{-1} d_0^{-\alpha}}{\sigma^2} > \tau \right] 
= \mathbb{P} \left[ d_0 < \left( \frac{g_0}{\tau A \sigma^2} \right)^{1/\alpha} \right] 
= F_{d_0}(x)\big|_{x=\left( \frac{g_0}{\tau A \sigma^2} \right)^{1/\alpha}},
\end{split}
\end{equation*}
where $F_{d_0}(x)$ is the CDF of $d_0$. In other words, the coverage probability of a noise-limited mmWave link is the probability that at least one transmitter (BS in cellular mode or candidate relay UE in D2D relay mode) with LOS path to the typical UE falls into the disc $\boldsymbol{B}(\boldsymbol{o}, \left( \frac{g_0}{\tau A \sigma^2} \right)^{1/\alpha})$. We have the following result for a noise-limited mmWave link.
\begin{corr}[Noise-limited mmWave coverage probability]
\label{corr:cov_mmwave_d2d}
\textup{
Given an SINR threshold $\tau$, the coverage probability of a noise-limited mmWave link is
\begin{equation}
\label{eq:cov_mmwave_noise_limited}
\begin{split}
p_\text{c}(\tau) = 1 - \exp \left(-\frac{2 \pi \lambda c}{\beta^2} \left( 1 - e^{-\beta (\frac{g_0}{\tau A \sigma^2})^{1/\alpha}} -\beta (\frac{g_0}{\tau A \sigma^2})^{1/\alpha} e^{-\beta (\frac{g_0}{\tau A \sigma^2})^{1/\alpha}} \right) \right),
\end{split}
\end{equation}
where $\lambda$ is the transmitter intensity, i.e., $\lambda = \lambda_\text{b}$ for a cellular link and $\lambda = \lambda_\text{r}$ for a D2D link.
}
\end{corr}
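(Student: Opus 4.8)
The noise-limited reduction carried out just before the statement already expresses the coverage probability as $F_{d_0}(r)$ with $r = (g_0/(\tau A \sigma^2))^{1/\alpha}$, where $F_{d_0}$ is the CDF of the distance $d_0$ to the nearest LOS transmitter and $g_0 = G_1 = Gm_\text{b} Gm_\text{u}$ is the (deterministic) on-boresight gain of the serving link, so no averaging over the desired-link gain is needed. Thus the entire task is to show
\[
F_{d_0}(r) = 1 - \exp\!\big( -\Lambda(r,\lambda) \big),
\]
with $\Lambda$ as in \eqref{eq:lambda_big_generic} and $\lambda \in \{\lambda_\text{b}, \lambda_\text{r}\}$; substituting the value of $r$ then gives \eqref{eq:cov_mmwave_noise_limited} verbatim. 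I would give the argument via the void probability of the LOS-transmitter process, which is the most transparent route, and record the one-line alternative of integrating the PDF from Lemma~\ref{lem:pdf_d0_mmwave}.

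\emph{Void-probability route.} Under the independent-blocking approximation underlying Lemma~\ref{lem:los_prob}, the transmitters with a LOS path to the typical UE at $\boldsymbol o$ form an inhomogeneous PPP obtained from $\Phi_\text{b}$ (resp. $\Phi_\text{r}$) by location-dependent thinning with retention probability $p_\text{L}(\|x\|) = c e^{-\beta \|x\|}$, hence with intensity density $\lambda c e^{-\beta \|x\|}$. Its mean number of points in the disc $\boldsymbol B(\boldsymbol o, r)$ is, in polar coordinates,
\[
\int_0^{2\pi}\!\!\int_0^r \lambda c e^{-\beta t}\, t \,\mathrm d t\, \mathrm d\theta
= 2\pi \lambda c \int_0^r t e^{-\beta t}\,\mathrm d t
= \frac{2\pi\lambda c}{\beta^2}\big( 1 - e^{-\beta r} - \beta r e^{-\beta r} \big)
= \Lambda(r,\lambda),
\]
using the elementary antiderivative of $t e^{-\beta t}$. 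Now $\{ d_0 \ge r \}$ is exactly the event that $\boldsymbol B(\boldsymbol o, r)$ contains no LOS transmitter, so by the void probability of a PPP it equals $e^{-\Lambda(r,\lambda)}$; therefore $p_\text{c}(\tau) = \mathbb P(d_0 < r) = 1 - e^{-\Lambda(r,\lambda)}$, and inserting $r = (g_0/(\tau A \sigma^2))^{1/\alpha}$ completes the proof. As a consistency check, differentiating \eqref{eq:lambda_big_generic} gives $\tfrac{\mathrm d}{\mathrm d d}\Lambda(d,\lambda) = 2\pi\lambda c\, d\, e^{-\beta d}$, so the density in \eqref{eq:pdf_d0_los} is precisely $\Lambda'(d,\lambda)\, e^{-\Lambda(d,\lambda)} = -\tfrac{\mathrm d}{\mathrm d d} e^{-\Lambda(d,\lambda)}$; integrating from $0$ to $r$ and using $\Lambda(0,\lambda) = 0$ recovers the same $F_{d_0}(r)$.

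There is no genuinely hard step: once the noise-limited assumption removes the interference term (and with it the dominant-interferer machinery of Theorems~\ref{theorem:cov_mmwave_cellular} and~\ref{theorem:cov_mmwave_d2d}), everything reduces to one elementary one-variable integral. The only subtlety worth a sentence is that $d_0$ is a defective random variable: with probability $\lim_{r\to\infty} e^{-\Lambda(r,\lambda)} = e^{-2\pi\lambda c/\beta^2}$ there is no LOS transmitter anywhere and the link is in outage for every $\tau$. This is handled automatically by writing the coverage probability as the unconditional $\mathbb P(d_0 < r)$ rather than as a conditional probability, and the final expression \eqref{eq:cov_mmwave_noise_limited} is correct as stated.
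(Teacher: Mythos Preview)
Your proposal is correct and follows essentially the same approach the paper outlines: the text immediately preceding the corollary already reduces $p_\text{c}(\tau)$ to $F_{d_0}(r)$ and verbally identifies this with the probability that at least one LOS transmitter lies in $\boldsymbol B(\boldsymbol o, r)$, which is precisely the void-probability argument you make explicit. Your additional remarks---the consistency check against Lemma~\ref{lem:pdf_d0_mmwave} and the observation that $d_0$ is defective---are correct refinements that the paper leaves implicit.
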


We note that the coverage probability expression for mmWave links is largely simplified under the noise-limited assumption. In Section~\ref{sec:simulation}, we evaluate the accuracy of this assumption.

\subsection{Coverage Probability of Microwave D2D Links}
\label{sec:coverage_d2d_microwave}

As in Section~\ref{sec:coverage_d2d_mmwave}, we approximate interferers on the microwave D2D link as a homogeneous PPP with intensity $\rho \lambda_\text{b}$, where $\rho$ is the multiplexing factor. Under the UE association rule introduced in Section~\ref{sec:ue_association}, we have the following result for microwave D2D links. 
\begin{theorem}[Microwave D2D coverage probability]
\label{theorem:cov_microwave_d2d}
\textup{
Given the outage threshold $\tau$ and candidate relay UE intensity $\lambda_\text{r}$, the coverage probability of a microwave D2D link, $p_{c,D}(\tau)$, is given by
\vspace{-8pt}
\begin{equation}
\label{eq:coverage_microwave_d2d}
p_\text{c,D}(\tau) = S_\text{N} p_\text{c,N}(\tau) + S_\text{L} p_\text{c,L}(\tau),
\vspace{-8pt}
\end{equation}
where 
\begin{equation}
S_\text{L} = 2\pi\lambda_\text{r} c \int_0^\infty x e^{-\Lambda(x,\lambda_\text{r}) -\beta x -\pi \lambda_\text{r} \tilde{a}^2 x^{2 \frac{\alpha_\text{L}}{\alpha_\text{N}}} +\Lambda( \tilde{a} x^{\frac{\alpha_\text{L}}{\alpha_\text{N}}}, \lambda_\text{r})} \mathrm{d}x
\end{equation}
is the probability that the typical UE associates with a LOS candidate relay UE, $S_\text{N} = 1 - S_\text{L}$ is the probability that the typical UE associates with a NLOS candidate relay UE, and $\tilde{a} = (A_\text{L}/A_\text{N})^{1/\alpha_\text{N}}$. 
In~\eqref{eq:coverage_microwave_d2d}, $p_\text{c,N}(\tau)$ and $p_\text{c,L}(\tau)$ are the coverage probabilities given that the nearest NLOS and nearest LOS candidate relay UEs are selected as a relay, respectively: i.e., 
\begin{equation}
\label{eq:coverage_micro_nlos}
\begin{split}
p_\text{c,N}(\tau) &= 2\pi\lambda_\text{r} \int_0^\infty x (1 -c e^{-\beta x}) e^{\Lambda(x,\lambda_\text{r}) -\pi\lambda_\text{r} x^2 - \mu\tau\sigma^2 A_\text{N} x^{\alpha_\text{N}}} \mathcal{L}_{I}(\mu\tau A_\text{N} x^{\alpha_\text{N}}) \mathrm{d}x \quad \text{and}\\
\end{split}
\end{equation}
\singlespacing
\vspace{-0.8cm}
\begin{equation}
\label{eq:coverage_micro_los}
\begin{split}
p_\text{c,L}(\tau) = 2\pi \lambda_\text{r} c \int_{x>0} e^{- \Lambda(x, \lambda_\text{r}) -\beta x -\mu\tau \sigma^2 A_\text{L} x^{\alpha_\text{L}}} \mathcal{L}_{I}(\mu\tau A_\text{L} x^{\alpha_\text{L}}) \mathrm{d}x, \\
\end{split}
\end{equation}
where $\mathcal{L}_{I}(s)$ is the Laplace transform of the interference, 
\begin{equation}
\label{eq:laplace_interference}
\begin{split}
\mathcal{L}_{I}(s) = \exp \left( -\pi \rho \lambda_\text{b} s^{2/\alpha_\text{N}} \frac{2\pi/\alpha_\text{N}}{\sin(2\pi/\alpha_\text{N})} \right).
\end{split}
\vspace{-8pt}
\end{equation}
}
\end{theorem}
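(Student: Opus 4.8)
The plan is to condition on which relay the typical UE is served by and apply the law of total probability, exactly as the form of~\eqref{eq:coverage_microwave_d2d} suggests. Under the association rule of Section~\ref{sec:ue_association}, the UE connects to its nearest LOS candidate relay (distance $d_{0,\text{L}}$, density from Lemma~\ref{lem:pdf_d0_mmwave} with $\lambda=\lambda_\text{r}$) whenever $A_\text{L}d_{0,\text{L}}^{\alpha_\text{L}}<A_\text{N}d_{0,\text{N}}^{\alpha_\text{N}}$, i.e.\ $d_{0,\text{N}}>\tilde a\,d_{0,\text{L}}^{\alpha_\text{L}/\alpha_\text{N}}$, and to its nearest NLOS candidate relay (distance $d_{0,\text{N}}$) otherwise; thus $p_\text{c,D}(\tau)=\mathbb{P}(\text{SINR}>\tau,\,\text{LOS assoc})+\mathbb{P}(\text{SINR}>\tau,\,\text{NLOS assoc})$, and each term will be computed by a further conditioning on the serving distance.

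First I would pin down the point-process structure. Consistent with the per-link independent-blocking model underlying Lemma~\ref{lem:los_prob}, I treat each point of $\Phi_\text{r}$ as independently LOS with probability $p_\text{L}(r)=ce^{-\beta r}$; by the thinning theorem the LOS and NLOS candidate relays are then \emph{independent} inhomogeneous PPPs of intensities $\lambda_\text{r}p_\text{L}(r)$ and $\lambda_\text{r}(1-p_\text{L}(r))$, so $d_{0,\text{L}}$ and $d_{0,\text{N}}$ are independent. The void probabilities in $\boldsymbol{B}(\boldsymbol{o},y)$ are $\mathbb{P}(d_{0,\text{L}}>y)=e^{-\Lambda(y,\lambda_\text{r})}$ and $\mathbb{P}(d_{0,\text{N}}>y)=e^{\Lambda(y,\lambda_\text{r})-\pi\lambda_\text{r}y^{2}}$, since the mean LOS (resp.\ NLOS) relay count in $\boldsymbol{B}(\boldsymbol{o},y)$ is $\Lambda(y,\lambda_\text{r})$ (resp.\ $\pi\lambda_\text{r}y^{2}-\Lambda(y,\lambda_\text{r})$). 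Then $S_\text{L}=\int_{0}^{\infty}f_{d_{0,\text{L}}}(x)\,\mathbb{P}(d_{0,\text{N}}>\tilde a\,x^{\alpha_\text{L}/\alpha_\text{N}})\,\mathrm{d}x$, and inserting the two closed forms above reproduces the stated $S_\text{L}$ exactly; $S_\text{N}=1-S_\text{L}$.

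Next, conditioning on serving distance $x$ and link type $j\in\{\text{L},\text{N}\}$, the D2D SINR is $hA_{j}^{-1}x^{-\alpha_{j}}/(\sigma^{2}+I)$ with $h\sim\text{Exp}(\mu)$ and $I$ the aggregate interference from the intensity-$\rho\lambda_\text{b}$ interferer PPP of Section~\ref{sec:coverage_d2d_mmwave} (taken here with path-loss exponent $\alpha_\text{N}$). Because $h$ is exponential, $\mathbb{P}(\text{SINR}>\tau\mid x,I)=e^{-\mu\tau A_{j}x^{\alpha_{j}}(\sigma^{2}+I)}$, and averaging over $I$ gives $e^{-\mu\tau\sigma^{2}A_{j}x^{\alpha_{j}}}\mathcal{L}_{I}(\mu\tau A_{j}x^{\alpha_{j}})$. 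The Laplace transform I would get from the PGFL of the interferer PPP with i.i.d.\ exponential marks, $\mathcal{L}_{I}(s)=\exp\!\big(-2\pi\rho\lambda_\text{b}\int_{0}^{\infty}\big(1-\mathbb{E}_{h}e^{-s h\,r^{-\alpha_\text{N}}/A_\text{N}}\big)r\,\mathrm{d}r\big)$; the radial integral is elementary, converges for $\alpha_\text{N}>2$, and reduces via a power substitution and the Euler reflection formula to the closed form~\eqref{eq:laplace_interference}. Integrating $e^{-\mu\tau\sigma^{2}A_{j}x^{\alpha_{j}}}\mathcal{L}_{I}(\mu\tau A_{j}x^{\alpha_{j}})$ against $f_{d_{0,\text{L}}}$ for $j=\text{L}$ and against $f_{d_{0,\text{N}}}$ for $j=\text{N}$ produces $p_\text{c,L}(\tau)$ and $p_\text{c,N}(\tau)$, and combining the branches gives~\eqref{eq:coverage_microwave_d2d}.

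The step that needs the most care is the bookkeeping of the LOS/NLOS association inside the two joint probabilities. Written out exactly, $\mathbb{P}(\text{SINR}>\tau,\,\text{LOS assoc})=\int_{0}^{\infty}f_{d_{0,\text{L}}}(x)\,\mathbb{P}(d_{0,\text{N}}>\tilde a\,x^{\alpha_\text{L}/\alpha_\text{N}})\,e^{-\mu\tau\sigma^{2}A_\text{L}x^{\alpha_\text{L}}}\mathcal{L}_{I}(\mu\tau A_\text{L}x^{\alpha_\text{L}})\,\mathrm{d}x$, and pulling the inner void-probability factor out of the integral as the scalar $S_\text{L}$ — so that the branch factors cleanly as $S_\text{L}\,p_\text{c,L}(\tau)$ — amounts to treating the serving-link geometry as independent of which relay type is selected; I would state this decoupling explicitly, since it (and not the Laplace-transform computation) is the real approximation in the theorem. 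The remaining obstacles are milder: the independent per-link blocking needed for $d_{0,\text{L}}\perp d_{0,\text{N}}$ (already invoked in Lemma~\ref{lem:los_prob}), the convergence condition $\alpha_\text{N}>2$, and the inherited homogeneous-PPP model for the uplink interferers.
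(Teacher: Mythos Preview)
Your approach is essentially the same as the paper's: compute the association probabilities $S_\text{L},S_\text{N}$ from the independence of the LOS/NLOS thinnings, use the exponential distribution of $h$ to reduce the conditional coverage to $e^{-\mu\tau\sigma^{2}A_{j}x^{\alpha_{j}}}\mathcal{L}_{I}(\mu\tau A_{j}x^{\alpha_{j}})$, cite the standard PGFL result for $\mathcal{L}_{I}$, integrate against $f_{d_{0,\text{L}}}$ or $f_{d_{0,\text{N}}}$, and combine via the law of total probability.

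Where you go beyond the paper is your last paragraph. You correctly notice that the exact joint probability $\mathbb{P}(\text{SINR}>\tau,\,\text{LOS assoc})$ carries the factor $\mathbb{P}(d_{0,\text{N}}>\tilde a\,x^{\alpha_\text{L}/\alpha_\text{N}})$ \emph{inside} the integral over $x$, so that writing it as the product $S_\text{L}\,p_\text{c,L}(\tau)$ amounts to replacing the conditional density $f_{d_{0,\text{L}}\mid \text{LOS assoc}}(x)=f_{d_{0,\text{L}}}(x)\,\mathbb{P}(d_{0,\text{N}}>\tilde a\,x^{\alpha_\text{L}/\alpha_\text{N}})/S_\text{L}$ by the marginal $f_{d_{0,\text{L}}}$. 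The paper's proof makes exactly this substitution without flagging it (it integrates $p_\text{c,N}(\tau,x)$ against the unconditional $f_{d_{0,\text{N}}}$ and calls the result the ``coverage given NLOS association''). So your explicit identification of this decoupling as the real approximation in the theorem is an improvement in rigor over the paper, not a deviation from it.
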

\begin{proof} The proof is provided in Appendix~\ref{app:proof_cov_microwave_d2d}.
\end{proof}

Since UEs with NLOS paths to the destination UE can serve as relays, we expect to achieve better coverage using microwave D2D relays than using mmWave D2D relays. We will verify this intuition in Section~\ref{sec:simulation}.

\section{Spectral Efficiency with D2D Relays}
\label{sec:spectral}
Spectral efficiency (SE) is defined as the bit rate per unit bandwidth per cell. In general, the SE can be calculated using Shannon's theorem: $\gamma\triangleq \log_2(1+\text{SINR})$ (bits/s/Hz). Given the cellular coverage probability $p_\text{c,C}(\tau)$, the average downlink SE can be evaluated as follows \cite{bai:coverage}:
\begin{equation}
\label{eq:cellular_se}
\gamma = \frac{1}{\ln2} \int_{t>0} p_\text{c,C}(t) \frac{1}{1+t} \mathrm{d}t.
\end{equation}

In D2D relay-assisted networks, however, \eqref{eq:cellular_se}
no longer holds. Intuitively, the downlink SE of a relay-assisted network depends on the \textit{mode selection strategy}, which determines whether a UE is served in direct or relay mode.
In this paper, we consider a mode selection strategy that the downlink transmission is switched from direct cellular to D2D relay mode if the cellular link's SINR is below a threshold $\tau$, and a candidate relay UE is available which has SINRs above $\tau$ on both the BS-to-relay and relay-to-UE links. Apparently, the SE of a relay-assisted mmWave cellular network depends on the \textit{relaying SINR threshold} $\tau$. We first derive the average SE of the cellular downlink conditioned on the SINR being above or below $\tau$. Then, using $p_c(\tau)$ defined in~\eqref{eq:cov_prob}, we derive the average downlink SE using the total expectation theorem. Finally, we quantify the amount of uplink resources used by D2D links to support the relay transmissions.

The average downlink cellular SE conditioned on $\text{SINR}>\tau$ can be expressed as follows:
\begin{equation*}
\begin{split}
\gamma_\text{C}(\tau) &= \mathbb{E}[\log_2(1+\text{SINR})|\text{SINR}>\tau] \\
&= \frac{\mathbb{E}[\boldsymbol{1}_{\left[\tau, \infty\right)} (\text{SINR}) \log_2(1+ \text{SINR})]}{\mathbb{P}(\text{SINR}>\tau)}\\
&= \frac{1}{\ln2} \left( \ln(1+\tau) + \frac{1}{p_\text{c,C}(\tau)} \int_{\tau}^{\infty} \frac{p_\text{c,C}(t)}{t+1} \mathrm{d}t \right),
\end{split}
\end{equation*}
where $\boldsymbol{1}_A(x)$ is an indicator function that evaluates to 1 if $x \in A$ and 0 otherwise. Similarly, the average downlink cellular SE conditioned on $\text{SINR} \leq \tau$ can be expressed as:
\begin{equation*}
\gamma_{\bar{\text{C}}}(\tau) = \frac{1}{\ln2( 1- p_\text{c,C}(\tau))}\left( -p_\text{c,C}(\tau)\ln(1+\tau)+ \int_0^\tau \frac{p_\text{c,C}(x)}{1+x} \mathrm{d}x \right).
\end{equation*}
Then, by the total expectation theorem, the average SE on the downlink of a D2D relay-assisted network with relaying SINR threshold $\tau$ is
\begin{equation}
\label{eq:overall_se}
\gamma(\tau) =p_\text{c}(\tau) \gamma_\text{C}(\tau)  +(1-p_\text{c}(\tau)) \gamma_{\bar{\text{C}}}(\tau).
\end{equation}
Note that if $\tau\to 0$, then relay transmissions are not needed because $p_\text{c,C}(0) = 1$. If $\tau\to\infty$, then no candidate relays exist because $p_\text{c,C}(\infty) =  p_\text{c,D}(\infty) = 0$. Therefore, in both cases, $\gamma(\tau)\to \gamma$, where $\gamma$ is the average cellular SE defined in~\eqref{eq:cellular_se}. It is easy to verify this intuition from \eqref{eq:overall_se}.

The overall coverage probability in terms of an outage threshold can be thought of as the fraction of cellular transmissions with SINRs above that outage threshold. Thus, given the relaying threshold $\tau$, $(1-p_\text{c,C}(\tau))p_\text{c,R}(\tau) w_\text{dl}$ downlink resources are used by BS-to-relay links, where $w_\text{dl}$ is the cellular downlink bandwidth. Since the traffic volumes on the BS-to-relay and relay-to-UE links must be the same, the amount of uplink resources required to support D2D relaying is
\begin{equation}
\label{eq:d2d_resource}
\Upsilon(\tau) 
= \frac{\gamma_\text{C}(\tau)}{\gamma_\text{D}(\tau)}(1-p_\text{c,C}(\tau))p_\text{c,C}(\tau)p_\text{c,D}(\tau) w_\text{dl},
\end{equation}
where $\gamma_\text{D}(\tau)$ is the average D2D link SE given relaying threshold $\tau$, which can be derived similarly to $\gamma_\text{C}(\tau)$.

\section{Numerical Results}
\label{sec:simulation}

In this section, we validate our analytical results against simulations based on 3GPP network evaluation methodologies, and explore the effect of different parameters on the coverage probabilities and spectral efficiency. We first validate the LOS probability model~\eqref{eq:los_prob_simplify} in Section~\ref{sec:los_validate}. In Section~\ref{sec:sim_setup}, we describe the simulation setup. In Section~\ref{sec:simulation_verify}, we validate the derived coverage probabilities for the cellular and D2D links. In Section~\ref{sec:simulation_noise_limited}, we evaluate the noise-limited assumption for mmWave cellular links. In Section~\ref{sec:simulation_relay}, we investigate the coverage improvement achieved by D2D relaying under different network configurations. Finally, in Section~\ref{sec:simulation_se}, we evaluate the downlink SE and the effect of D2D transmissions on the uplink resources.

\subsection{Cylindrical Obstacle Model Validation}
\label{sec:los_validate}

To simplify the exposition and analysis, we adopt a cylindrical obstacle model for both outdoor urban and indoor office environments. Intuitively, cylindrical obstacles are reasonable approximations of human bodies in indoor environments. 
To verify the accuracy of the cylindrical obstacle model in outdoor environments, we compare the LOS probabilities predicted by it and the rectangular obstacle model~\cite{bai:coverage} against the statistical LOS probability obtained using real building data for the ultra-dense Chicago area \cite{kulkarni:coverage} \cite{chicago_data}.

In the cylindrical/rectangular obstacle models, buildings are approximated by cylinders/rectangles with average cross-sectional area matching the average cross-sectional area of the actual buildings. The obstacle intensity is determined by the average obstacle area and the ratio of the area covered by the obstacles. For example, using the cylindrical model, the obstacle intensity is determined as $\lambda_\text{o} = \xi/(\pi \mathbb{E}[R^2])$, where $\xi \in (0,1)$ denotes the ratio of the area that is covered by obstacles and the random variable $R$ denotes the obstacle radius. For the rectangular obstacle model, we consider two cases: $\mathbb{E}[L]=2\mathbb{E}[W]$ and $\mathbb{E}[L]=\mathbb{E}[W]$, where $\mathbb{E}[W]$ and $\mathbb{E}[L]$ are the expected length and width of the rectangular obstacles, respectively. To evaluate the statistical LOS probability using real building data, we randomly drop nodes in the simulated area. LOS exists between two nodes if neither is inside of a building and there is no building between them. The real building data for the Chicago area and the corresponding LOS probabilities are shown in Fig.~\ref{fig:los_validate}(a) and Fig.~\ref{fig:los_validate}(b), respectively. We note that $\xi \approx 0.38$ in the simulated area.

\begin{figure}
\centering
  \subfloat[Building distribution]{\includegraphics[width=.4\textwidth]{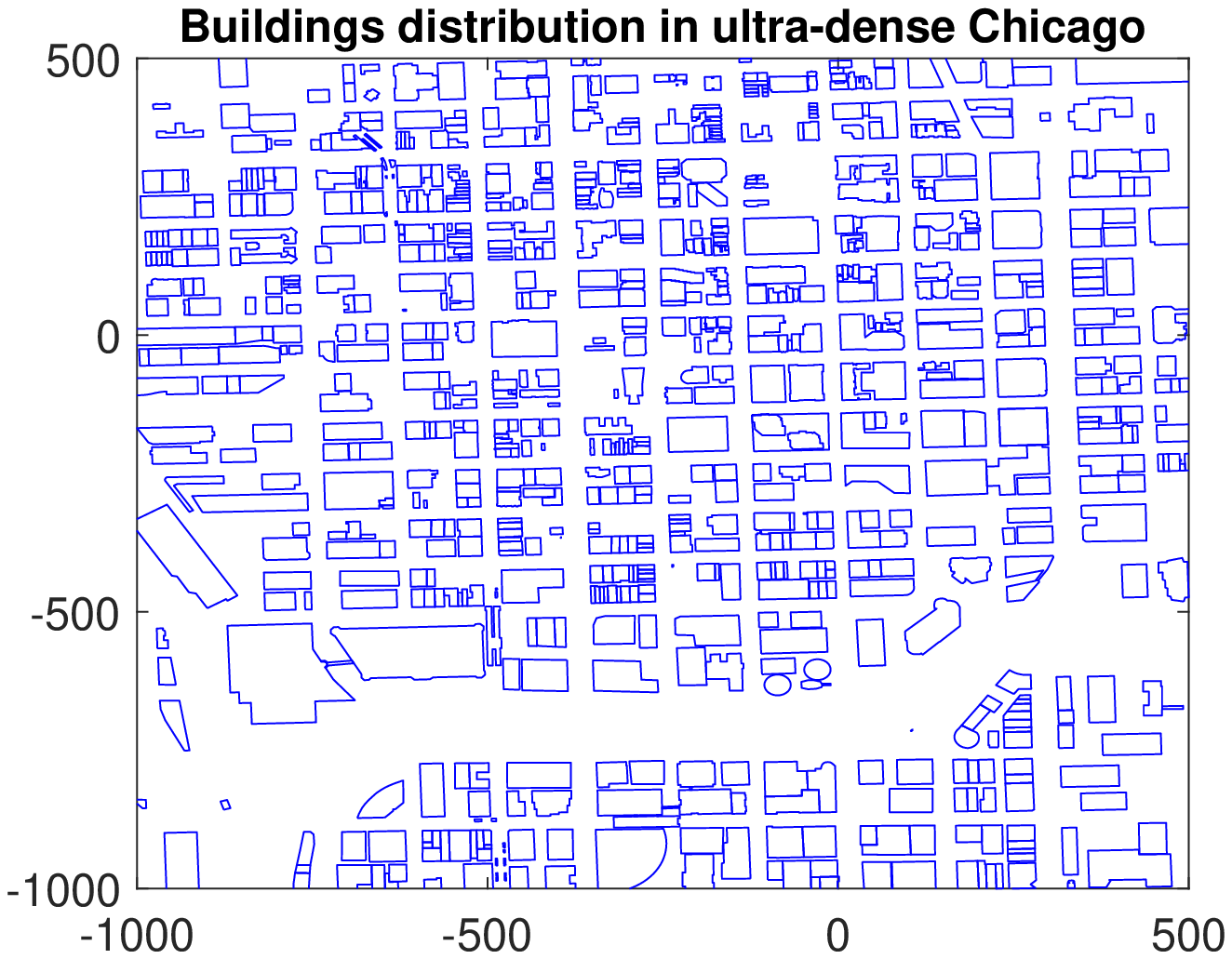}} 
  \subfloat[LOS probabilities]{\includegraphics[width=.4\textwidth]{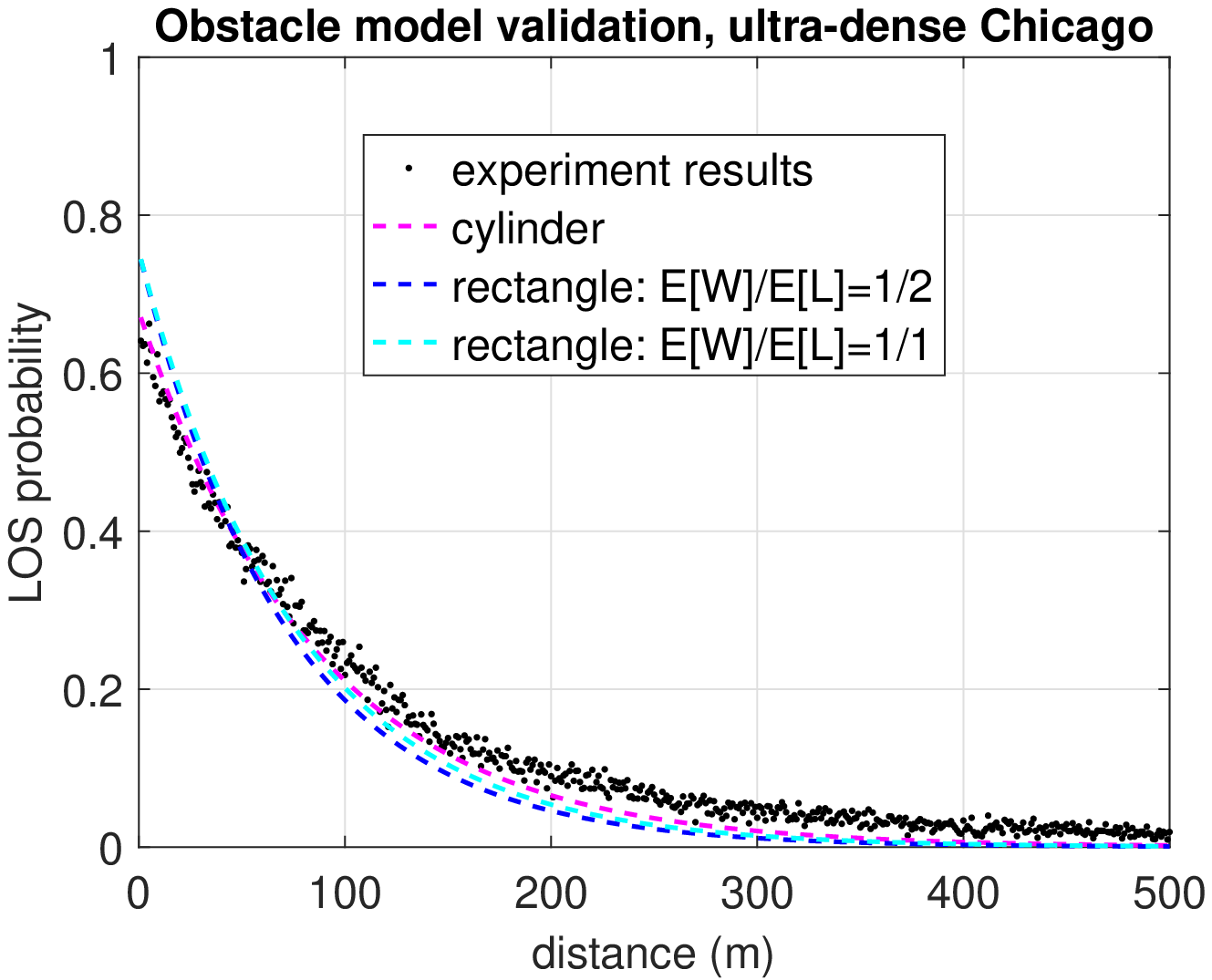}}   \caption{\label{fig:los_validate} Cylindrical obstacle model validation.}
\end{figure}

We can see that the cylindrical and rectangular obstacle models obtain very similar LOS probabilities, and the LOS probabilities obtained using the cylindrical obstacle model actually match the experimental results better than those obtained using the rectangular obstacle model. 

\subsection{Simulation Setup}
\label{sec:sim_setup}

We compare our analytical results against Urban Macro (UMa) and Indoor Office (Ind) 3GPP mmWave evaluation scenarios~\cite{3gpp:mmwave_TR}, and PPP-based network models. All of our simulations use the UE association strategy described in Section~\ref{sec:ue_association} to be consistent with our analytical results. Below, we describe our simulation setup. Key simulation parameters are listed in Table~\ref{table:simulation_parameter}.

\begin{table}
\small
\vspace{0.5in}
\caption{\label{table:simulation_parameter}List of configuration parameters for performance evaluation}
\centering 
\begin{tabular}{|l|l|l||l|l|l|}
\hline
\textbf{Parameters} & \textbf{UMa} & \textbf{Ind} & \textbf{Parameters} & \textbf{UMa} & \textbf{Ind} \\
\hline
BS intensity $\lambda_\text{b}$ (BS$/\text{m}^2$) & $4.62 \times 10^{-6}$ & $2 \times 10^{-3}$ & BS antenna height (m) & $25$ & $3$\\
\hline
Relay UEs per cell & $10$ & $3$ & BS Tx power (dBm) & $35$ & $24$\\
\hline
Obstacle radius $[r_\text{min}, r_\text{max}]$ & $[20, 30]$ & $[0.3, 0.6]$ & UE antenna height (m) & $1.5$ & $1$\\
\hline
Obstacle height $[h_\text{min}, h_\text{max}]$ & $[5, 25]$ & $[1, 2]$ & UE Tx power (dBm) & $23$ & $23$ \\
\hline
Obstacle thinning (cellular) $\eta_\text{C}$ & $0.5875$ & $0.5$ & Noise power (dBm/Hz) & $-174$ & $-174$\\
\hline
Obstacle thinning (D2D) $\eta_\text{D}$ & $1$ & $1$ & UE noise figure (dB) & $9$ & $9$ \\
\hline
\end{tabular}
\vspace{-0.2in}
\end{table}

\subsubsection{BS and UE distributions} 
In 3GPP's grid-based evaluation models, the UMa scenario has a hexagonal cell layout with an inter-site distance (ISD) of $500$ m. Meanwhile, the Ind scenario for mmWave evaluation has $12$ BSs in a $50$ m $\times$ $120$ m rectangular area, where each BS covers an area of size $25$ m $\times$ $20$ m (Figure $7.2$-$1$ in~\cite{3gpp:mmwave_TR}). For the PPP-based network models, the BSs are dropped according to 2D homogeneous PPPs with intensities $\lambda_\text{b}=4.62 \times 10^{-6}/\text{m}^2$ and $\lambda_\text{b}=2 \times 10^{-3}/\text{m}^2$ to match the average BS ISDs/densities in the 3GPP UMa and Ind scenarios, respectively. In both grid- and PPP-based network models, UEs are assumed to be dropped according to a 2D homogeneous PPP. If there are $n$ candidate relay UEs per cell on average, then the intensity of candidate relay UEs is $\lambda_\text{r} = n \lambda_\text{b}$, where $\lambda_\text{b}$ is the intensity of BSs. 

\subsubsection{Obstacle distributions} 
We assume that the obstacle radius $R$ and height $H$ are uniformly distributed on $[r_\text{min}, r_\text{max}]$ and $[h_\text{min}, h_\text{max}]$, respectively. Obstacle distribution parameters are set differently in the two scenarios according to their obstacle characteristics. Note that for the UMa scenario, we consider a moderately dense urban area with obstacle coverage $\xi=0.2$. In the Ind scenario, the obstacle parameters in Table~\ref{table:simulation_parameter} imply that there are $15$ obstacles in a $100~\text{m}^2$ area on average. The obstacle thinning parameters $\eta_\text{C}$ and $\eta_\text{D}$ are computed from the distribution of $H$ and the BS/UE antenna heights.

\subsubsection{Antenna configurations} 
To model beamforming, we assume that mmWave transmitters and receivers are equipped with uniform planar antenna arrays. We consider BSs with $N_\text{b} \times N_\text{b}$ antenna arrays, with $N_\text{b}=8$ and $N_\text{b}=4$, i.e., $64$ and $16$ transmit antennas, respectively. We assume that all UEs have $N_\text{u}\times N_\text{u}=2 \times 2$ antenna arrays for both transmitting and receiving. Antenna array gains and beam widths are determined according to \eqref{eq:antenna_parameter}. For example, the main lobe gain, side lobe gain, and beam width of a $4 \times 4$ square antenna array at the BS are $Gm_\text{b}=12.04$ dB, $Gs_\text{b}=0.69$ dB and $\phi_\text{b}=24.8^{\circ}$, respectively. We assume that microwave D2D links use a single transmit and receive antenna.

\subsubsection{Channel models} 
We use path loss models from 3GPP evaluation methodology documents. For mmWave links, the LOS path loss models for the UMa and Ind scenarios are \cite{3gpp:mmwave_TR}:

\singlespacing
\vspace{-0.8cm}
\begin{equation*}
\begin{split}
PL &=~ 32.4 + 20 \log_{10}(d)+ 20\log_{10}(f_c) \quad \text{and} \\
PL &=~ 32.4 + 17.3 \log_{10}(d)+ 20\log_{10}(f_c), \\
\end{split}
\end{equation*}
\doublespacing
respectively, where $d$ is the distance in meters and $f_c$ is the carrier frequency in GHz. For microwave D2D \cite{3gpp:d2d_843}, the LOS/NLOS path losses for the UMa scenario are

\singlespacing
\vspace{-0.8cm}
\begin{equation*}
\begin{split}
PL_\text{L} &=~ 27.0 + 22.7 \log_{10}(d)+ 20\log_{10}(f_c) \quad \text{and} \\
PL_\text{N} &=~ 14.78+5.83\log_{10}(1.5) + (44.9-6.55\log_{10}(1.5)) \log_{10}(d) + 34.97\log_{10}(f_c), \\
\end{split}
\end{equation*}
\doublespacing
respectively, and in the Ind scenario are

\singlespacing
\vspace{-0.8cm}
\begin{equation*}
\begin{split}
PL_\text{L} =& 89.5 + 16.9 \log_{10} (d/1000) \quad \text{and} \\
PL_\text{N} =& 147.4 + 43.3 \log_{10} (d/1000), \\
\end{split}
\end{equation*}
\doublespacing
respectively. In the simulations, we set the carrier frequencies for mmWave and microwave links to $28$ GHz and $2$ GHz, respectively, and the bandwidths to $100$ MHz and $20$ MHz, respectively. 
The linear scale path loss parameter $A$ in~\eqref{eq:signal_model_mm} can be computed from these path loss equations as described in Section~\ref{sec:channel_models}.
Note that 3GPP uses 3D distance in its mmWave path loss models. To avoid complicating the analysis in this paper, we use 2D distance for $d$, which is commonly assumed in stochastic geometry based network performance analysis.\footnote{The 2D and 3D distances are different in cellular links because of the different BS and UE antenna heights. Using 3D distance leads to a non-homogeneous point process of BSs from the perspective of the typical UE (because they are on different plains), which significantly complicates the analysis. On the other hand, using the 2D distance allows us to maintain homogeneity of the point process and keep the analysis relatively simple. To validate our decision to use 2D distance instead of 3D distance, we compare the path loss and coverage probabilities under each assumption. For UMa (resp. Ind), the path loss error at the $5^\text{th}$ and $50^\text{th}$ percentiles of desired BS-UE link distances are $0.52$ (resp. $1.41$) and $0.03$ (resp. $0.08$) dB, respectively. In the UMa scenario, the average percent error in the coverage probabilities is only 0.94\% (resp. 0.85\%) when there are 8 x 8 (resp. 4 x 4) BS antennas. On the other hand, for the Ind scenario, the average percent error in the coverage probabilities is only 5.38\% (resp. 6.11\%) when there are 8 x 8 (resp. 4 x 4) BS antennas. We conclude that approximating the 3D distance with the 2D distance does not significantly affect the path loss for the majority of mmWave cellular links and, furthermore, results in only minor underestimation of the coverage probabilities.}

To calculate the SINR in the mmWave simulations, we consider the interference power from all effective LOS interferers. In contrast, our analytical results only consider dominant interferers. The multiplexing factor $\rho$ is set to be $1$, which means we consider a fully loaded uplink network model with orthogonal multiple access such that the BS schedules exactly one cellular UE or D2D UE on each uplink sub-channel at a given time within a cell.

\subsection{Validation of Analytical Expressions}
\label{sec:simulation_verify}

In this section, we compare our analytical results against the UMa and Ind 3GPP mmWave performance evaluation scenarios, and PPP-based network simulations, as described in Section~\ref{sec:sim_setup}. 

\subsubsection{MmWave Cellular Link Coverage}

\begin{figure}
\centering
  \subfloat[3GPP UMa (Analy \& Grid Sims \& PPP Sims).]{\includegraphics[width=.4\textwidth]{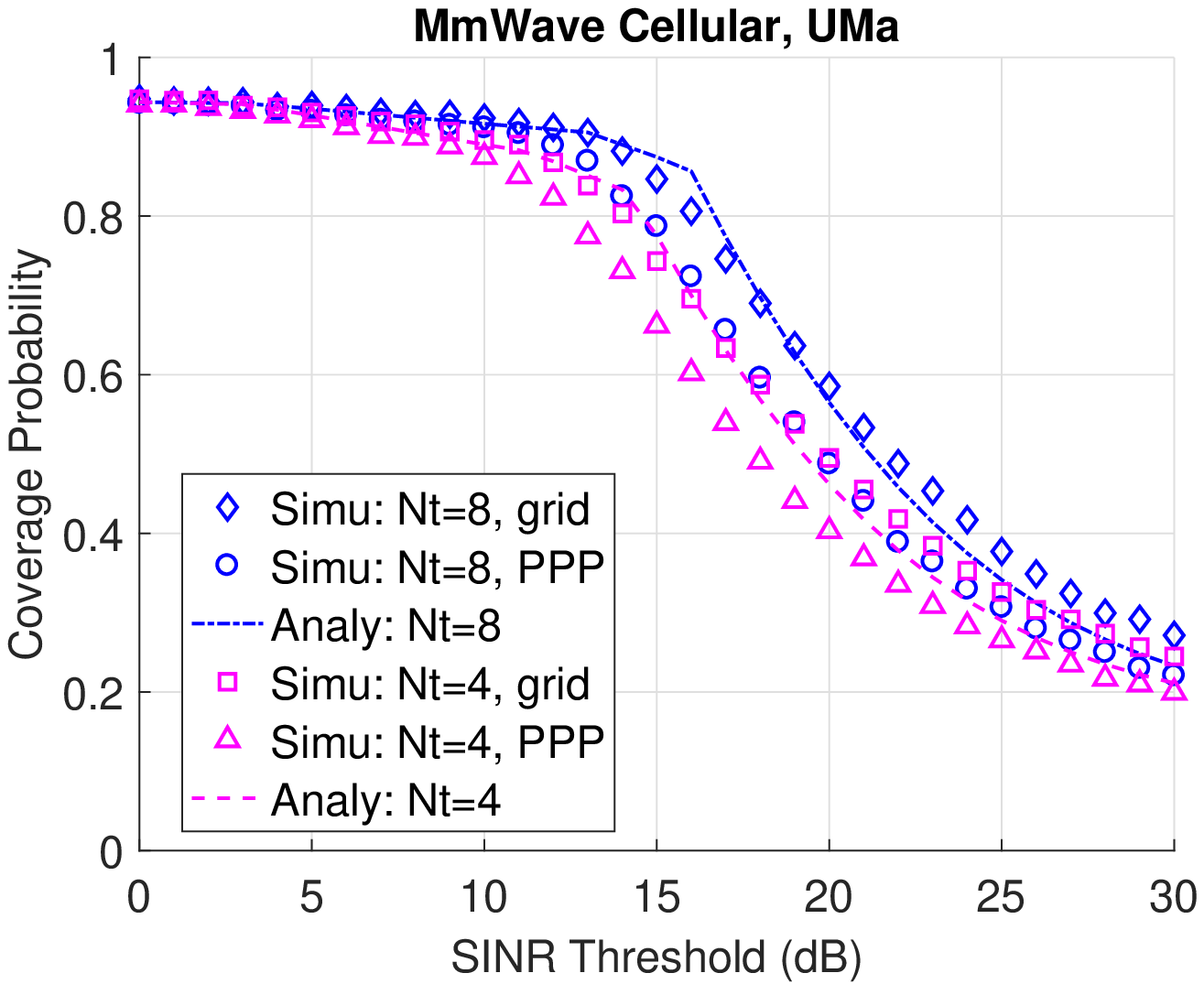}} 
  \subfloat[3GPP Ind (Analy \& Grid Sims \& PPP Sims).]{\includegraphics[width=.4\textwidth]{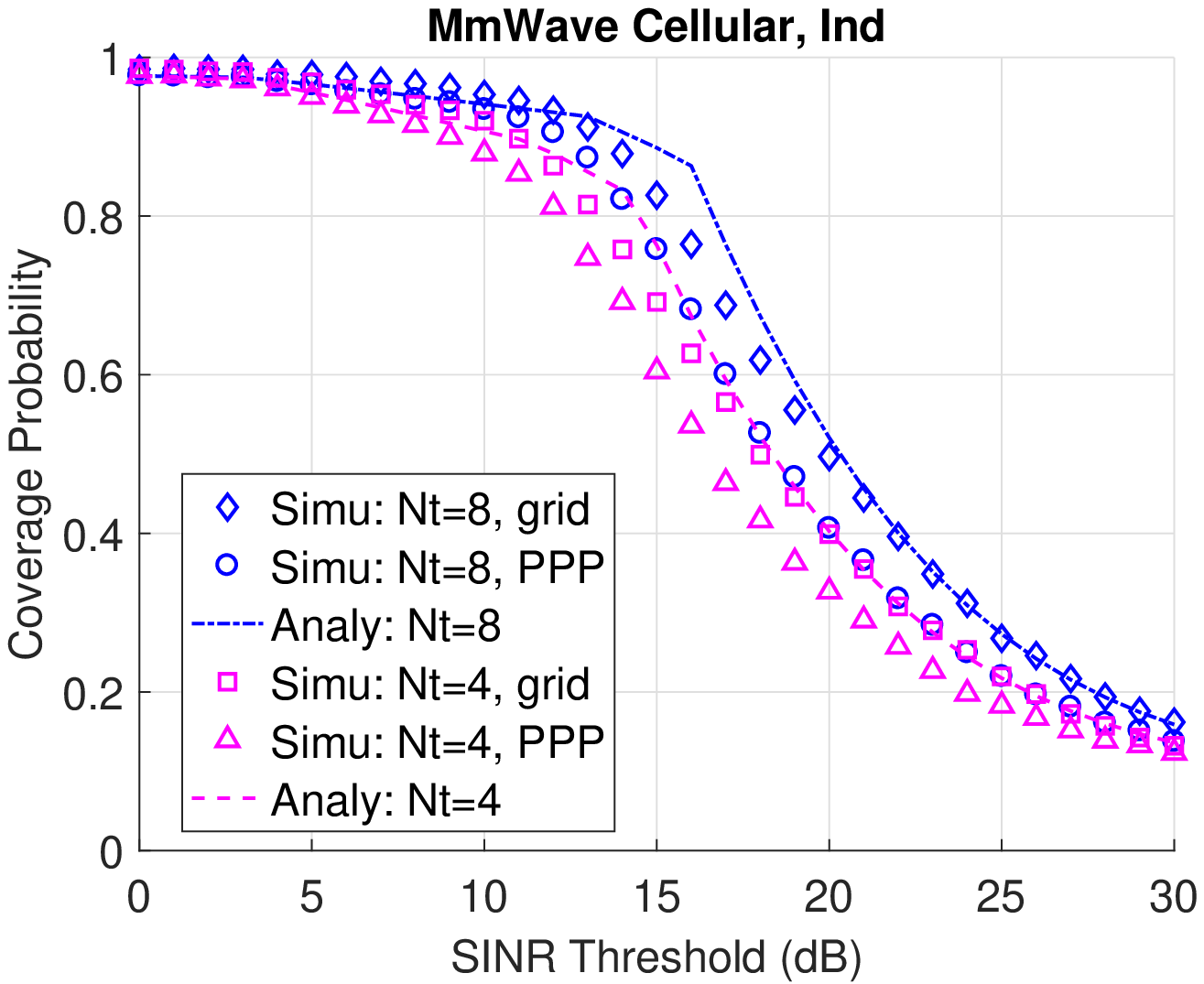}}   \caption{\label{fig:simu_verify_cellular} Validation of the mmWave cellular coverage probabilities in \eqref{eq:cov_mmwave_cellular} with respect to the SINR threshold.}
\end{figure}

In Fig.~\ref{fig:simu_verify_cellular}, we compare the mmWave cellular coverage probabilities, obtained analytically from \eqref{eq:cov_mmwave_cellular}, against PPP- and 3GPP grid-based model simulations. For illustration, the obstacle densities, $\xi$, for the UMa and Ind scenarios are set to $0.2$ and $0.08$, respectively. 
As expected, the analytical results obtained by dominant interferer analysis provide an an upper bound on the PPP-based model simulations. Additionally, for large SINR thresholds, the analytical results for the UMa scenario are bounded by the grid and PPP simulations. We conclude that the analytical coverage probability expression in~\eqref{eq:cov_mmwave_cellular} derived by dominant interferer analysis is a reasonable approximation in both 3GPP UMa and Ind scenarios. 

Another interesting observations about Fig.~\ref{fig:simu_verify_cellular} is, the analytical coverage probability declines faster starting around $\tau = 16$ dB. This happens because, if $\tau < 16$ dB, then the radius of the off-boresight NIR $D_1(\tau,d_0)$ defined in~\eqref{eq:radius_ir} and illustrated in Fig.~\ref{fig:interference_region_cellular} is smaller than $d_0$. In other words, the NIR determined by $D_1(\tau,d_0)$ does not contribute interference when $\tau < 16$ dB. Note that the SINR threshold that the decrease of coverage probability becomes faster depends on antenna configurations and fading model. We can also observe that the BS's antenna array size has a significant effect on the coverage, with larger arrays achieving higher coverage probabilities. In the remaining evaluations, we default to $8\times 8$ BS antenna arrays.

In Fig.~\ref{fig:simu_verify_cell_ob}, we show the mmWave cellular coverage probabilities in \eqref{eq:cov_mmwave_cellular} with respect to the SINR threshold for several obstacle densities. We observe that the analytical results obtained by dominant interferer analysis become increasingly accurate as the obstacle density increases. This observation also implies that dominant interferer analysis cannot provide accurate SINR distribution when obstacles are less dense. For example, in Ind, dominant interferer analysis accurately models the SINR distribution when $\xi>0.12$. 
Another interesting observation is that higher obstacle densities lead to higher coverage probabilities at high SINR thresholds. In other words, blockages help UEs achieve higher SINR in mmWave networks because they block signals from potential interferers.

\begin{figure}
\centering
  \subfloat[3GPP UMa (Analy \& PPP Sims)]{\includegraphics[width=.4\textwidth]{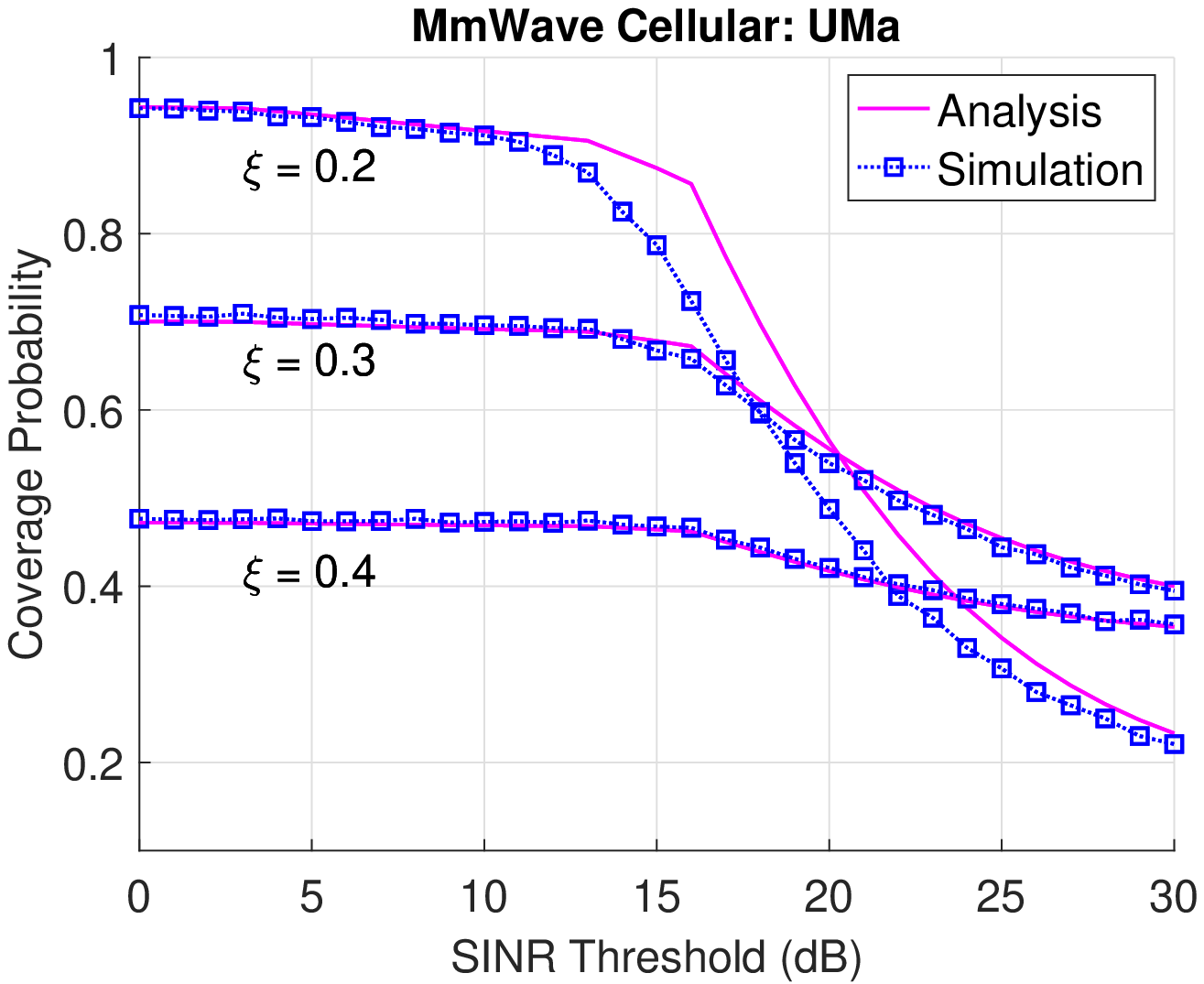}} 
  \subfloat[3GPP Ind (Analy \& PPP Sims)]{\includegraphics[width=.4\textwidth]{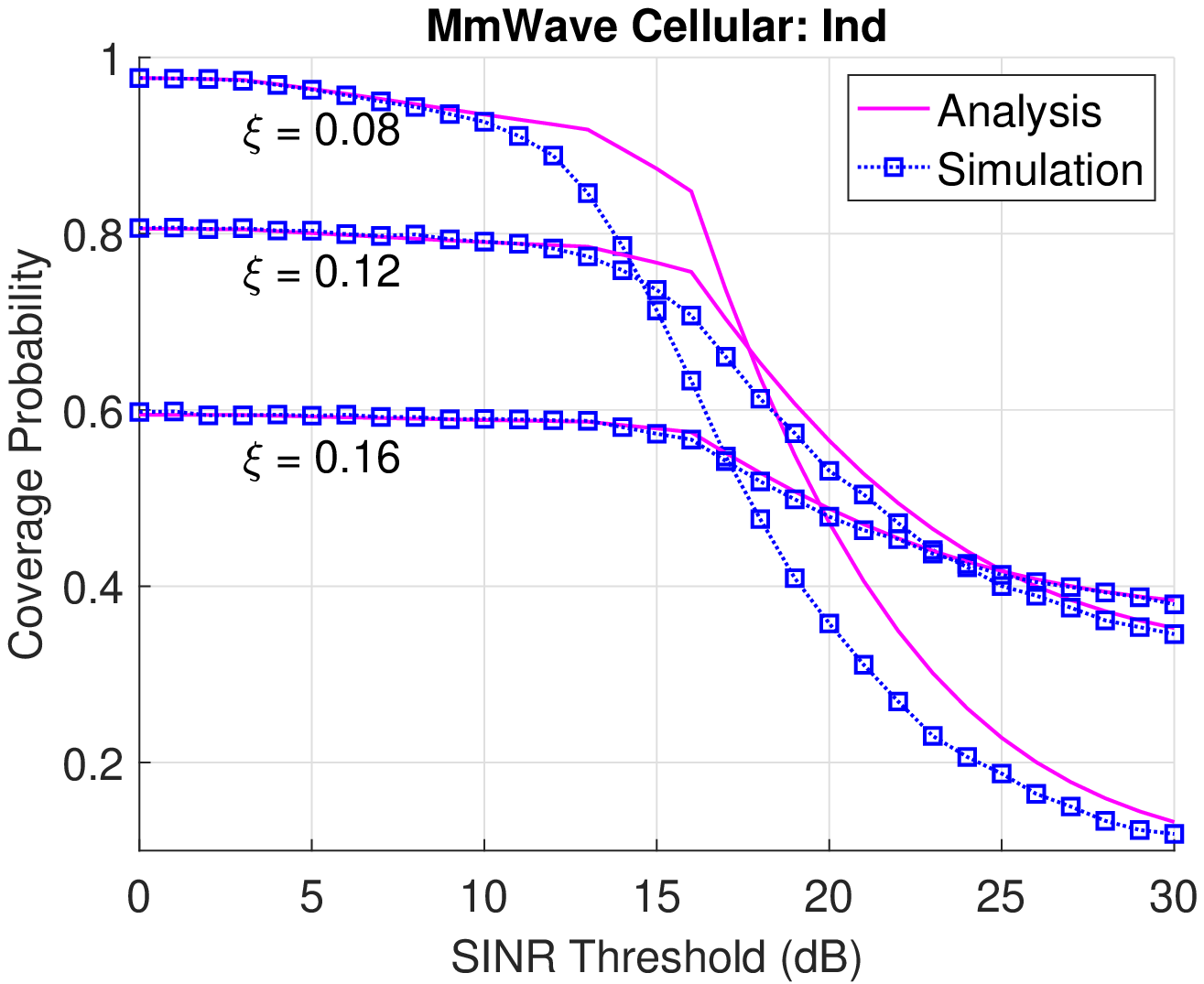}}   \caption{\label{fig:simu_verify_cell_ob} Validation of the mmWave cellular coverage probabilities \eqref{eq:cov_mmwave_cellular} with respect to obstacle densities against the PPP-based model simulations.}
\end{figure}

\subsubsection{D2D Link Coverage}

The analytical coverage probabilities for mmWave and microwave D2D links are shown in Fig.~\ref{fig:simu_verify_d2d} with respect to the SINR threshold and compared against PPP-based model simulations. We can see that dominant interferer analysis provides a significantly tighter upper bound on the PPP-based model simulations than for mmWave cellular links. This is because the relatively lower antenna heights of D2D transmitters and receivers result in more blockages and less interference from distant interferers. For microwave D2D links, the analytical results closely align with the PPP-based model simulations. Thus, we conclude that approximating uplink interferers using a homogeneous PPP is acceptable.

\begin{figure}
\centering
  \subfloat[3GPP UMa (Analy \& PPP Sims)]{\includegraphics[width=.4\textwidth]{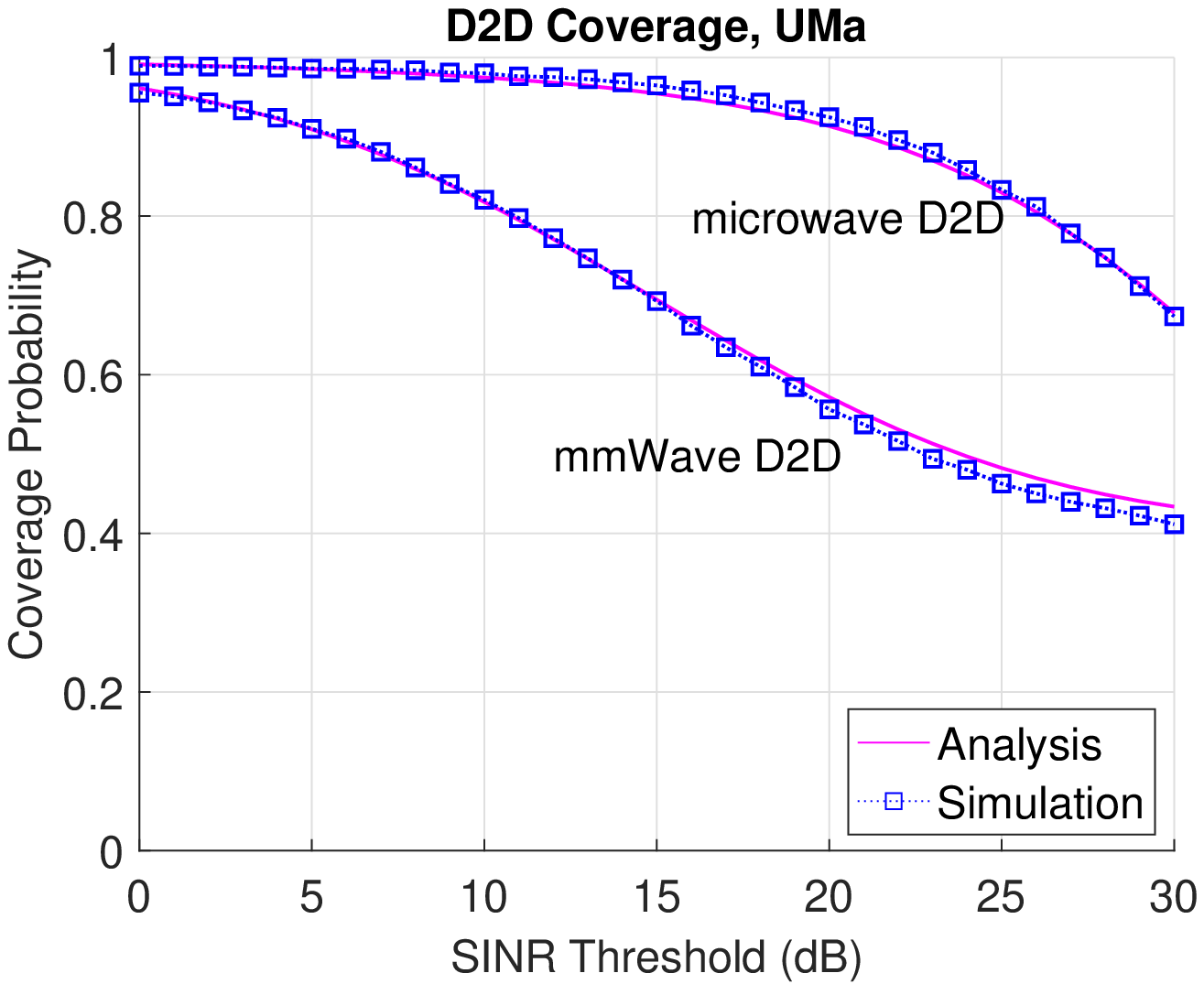}} 
  \subfloat[3GPP Ind (Analy \& PPP Sims)]{\includegraphics[width=.4\textwidth]{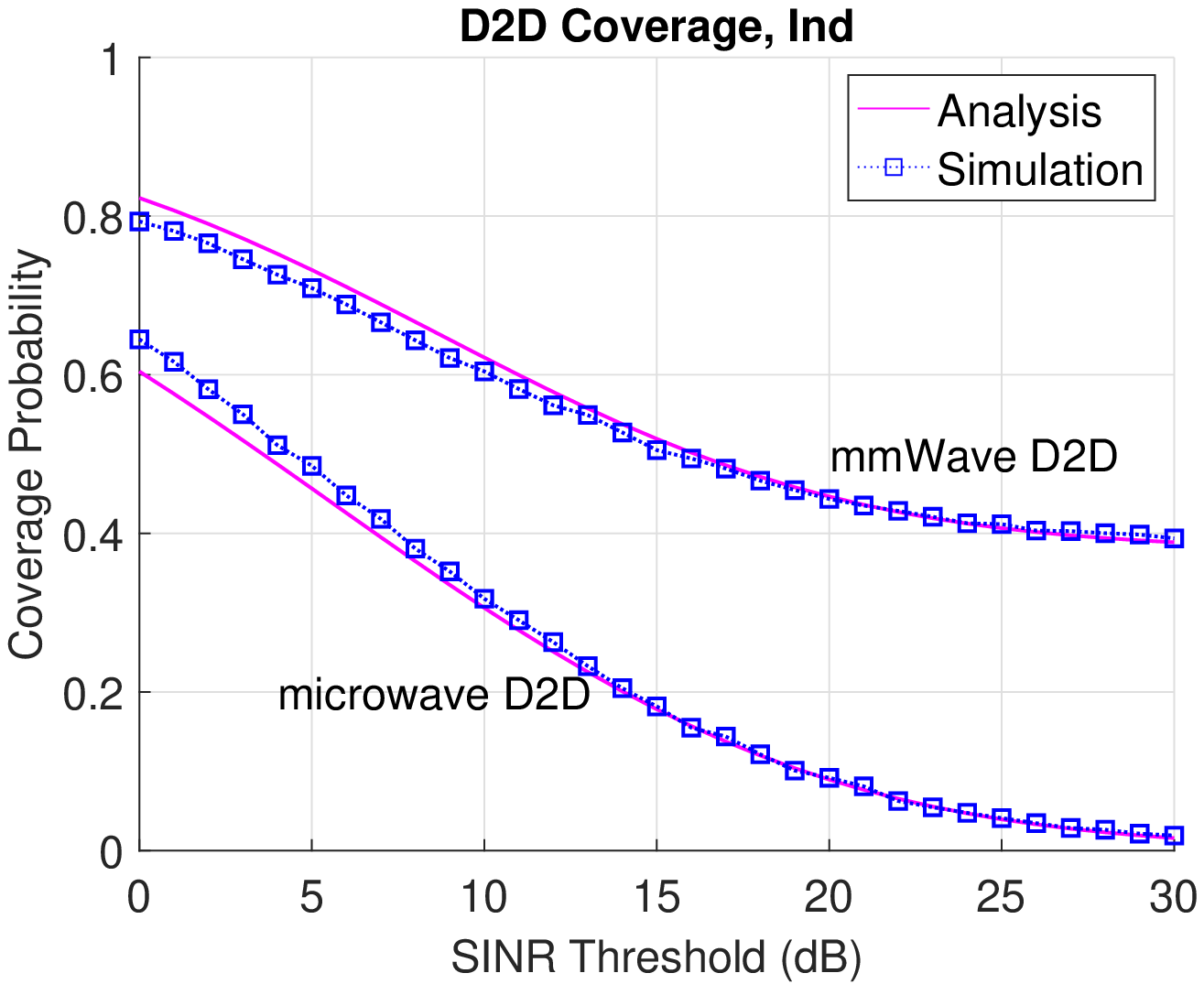}}  \caption{\label{fig:simu_verify_d2d} Validation of the mmWave and microwave D2D link coverage probabilities against simulations.}
\end{figure}

Comparing the two D2D options, we observe that microwave D2D performs better than mmWave D2D in the UMa scenario. This is because microwave D2D links can be established under NLOS conditions. However, microwave D2D links perform much worse than mmWave D2D links in the Ind scenario. This is because the dense BS deployment and fully utilized resources in each cell ($\rho=1$) cause severe interference in the uplink microwave spectrum. In contrast, mmWave D2D links experience less interference due to blockages and because the antenna arrays reject interference from off-boresight directions. Evidently, microwave D2D is a worse choice for extremely dense BS deployments, i.e. scenarios with dense interferers.

\subsection{Noise-Limited MmWave Link Coverage}
\label{sec:simulation_noise_limited}

In this section, we study the mmWave cellular link coverage probability under the noise-limited assumption. We compare the analytical results calculated according to \eqref{eq:cov_mmwave_noise_limited} with PPP-based BS model simulations to determine if we can ignore interference on mmWave links. In the simulations, both interference and noise are taken into account. 

The coverage probabilities for noise-limited mmWave cellular links, obtained analytically from~\eqref{eq:cov_mmwave_noise_limited}, are shown in Fig.~\ref{fig:noise_limited_cellular} with respect to the BS intensity. The noise-limited analytical results are compared against the analytical results obtained by dominant interferer analysis and PPP-based BS model simulations considering both interference and noise. Note that the urban macro and indoor office scenarios are based on the UMa and Ind scenarios, respectively, with parameter configurations in Table~\ref{table:simulation_parameter}, but with variations in the BS and obstacle densities. Here, we fix the outage threshold to $\tau=10$ dB and express the BS density using the equivalent ISD.

\begin{figure}
\centering
  \subfloat[Urban Macro (Analy \& PPP Sims)]{\includegraphics[width=.4\textwidth]{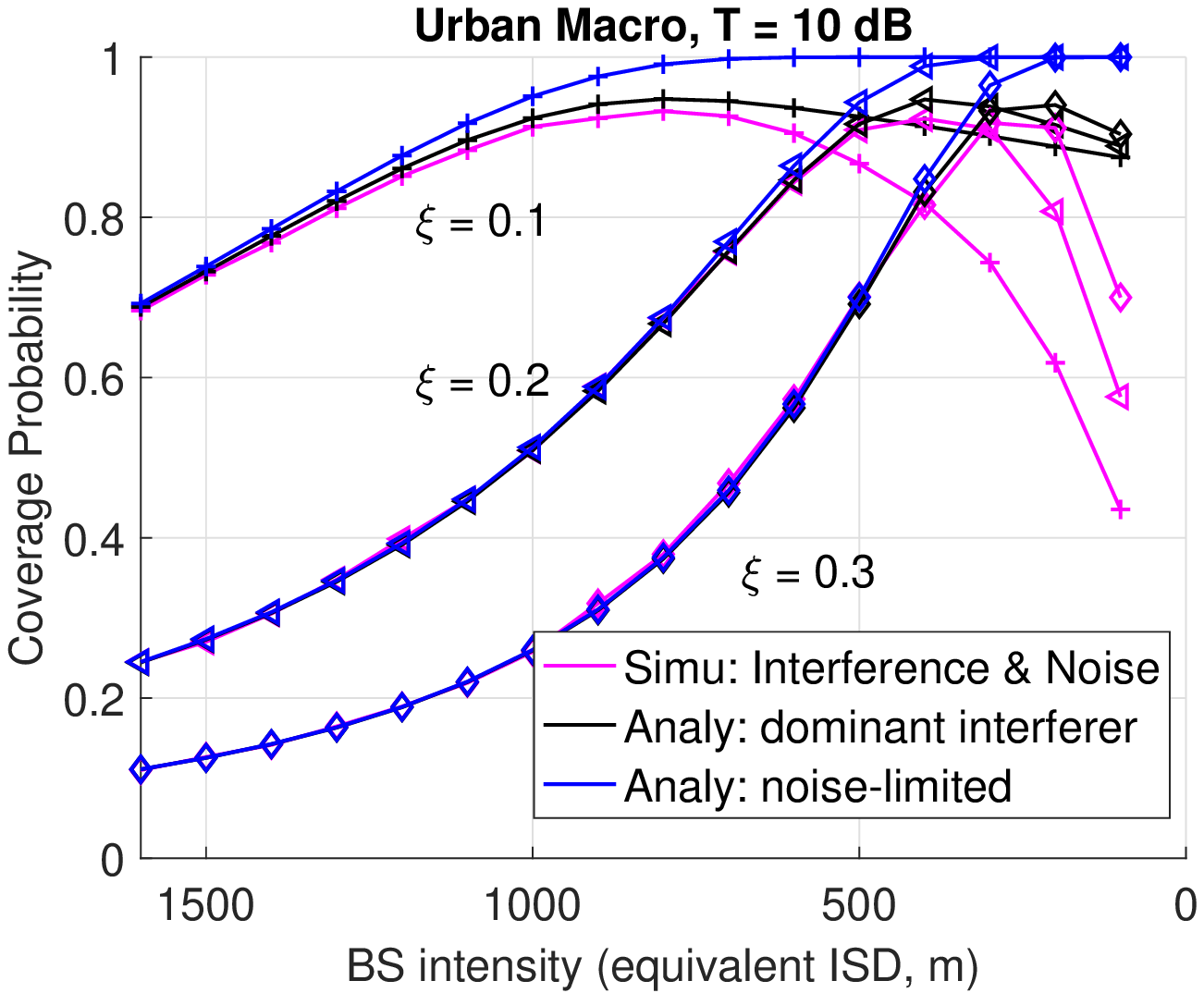}}
  \subfloat[Indoor Office (Analy \& PPP Sims)]{\includegraphics[width=.4\textwidth]{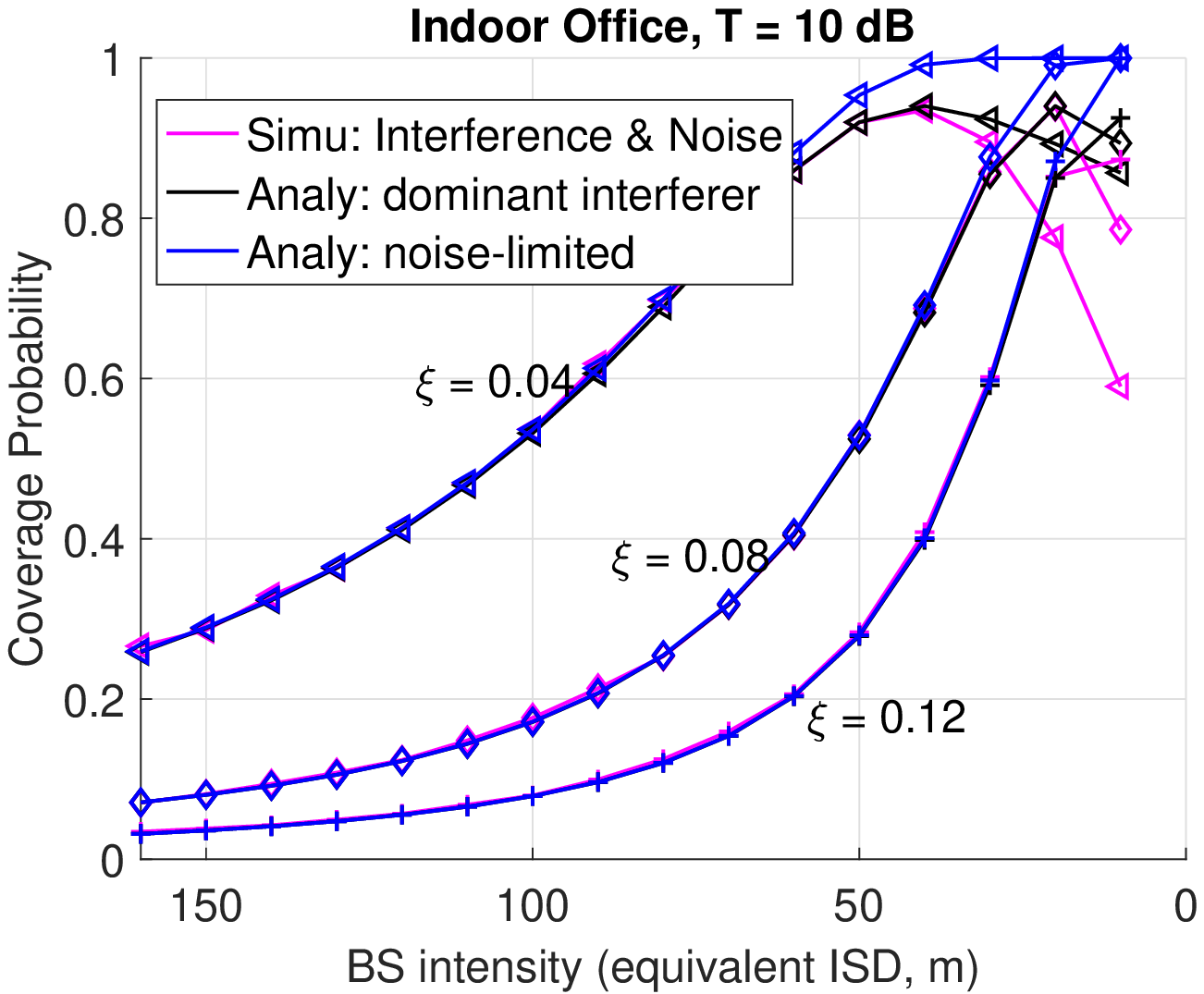}}  
\caption{\label{fig:noise_limited_cellular} Validation of the coverage expression in \eqref{eq:cov_mmwave_noise_limited} for noise-limited mmWave cellular links against the PPP-based model simulations. Note that there is an optimal BS density that increases with the obstacle density.}
\end{figure}

In both scenarios, the mmWave cellular link is noise-limited at low BS densities/large ISDs and interference-limited at high BS densities/small ISDs. We observe that for each obstacle density, there is an effective ISD \textit{threshold} below which the network tends to be interference-limited. For example, for $\xi=0.2$ in the urban macro scenario and $\xi=0.08$ in the indoor office scenario, ISDs below 700 m and 40 m, respectively, demonstrate interference-limited behavior. Furthermore, this BS density threshold increases with the obstacle density. We also observe that dominant interferer analysis becomes inaccurate at extreme BS densities/very small ISDs due to the increase in interference in extremely dense mmWave BS deployments.

\subsection{Coverage Improvement Enabled by D2D Relaying in MmWave Cellular Networks}
\label{sec:simulation_relay}

Now that the coverage probability expressions for the cellular and D2D links have been validated, we can calculate the coverage probabilities in D2D relay-assisted mmWave cellular networks using~\eqref{eq:cov_prob}.

\subsubsection{Coverage Improvement with 3GPP Configurations}

The coverage probabilities for a D2D relay-assisted mmWave cellular network, obtained analytically by \eqref{eq:cov_prob}, are shown in Fig.~\ref{fig:cov_comparison_3gpp} with respect to the SINR threshold. We note that D2D relays improve the coverage in both UMa and Ind scenarios, with larger gains in the UMa scenario due to the larger average transmit distances.

\begin{figure}
\centering
  \subfloat[3GPP UMa (Analytical)]{\includegraphics[width=.4\textwidth]{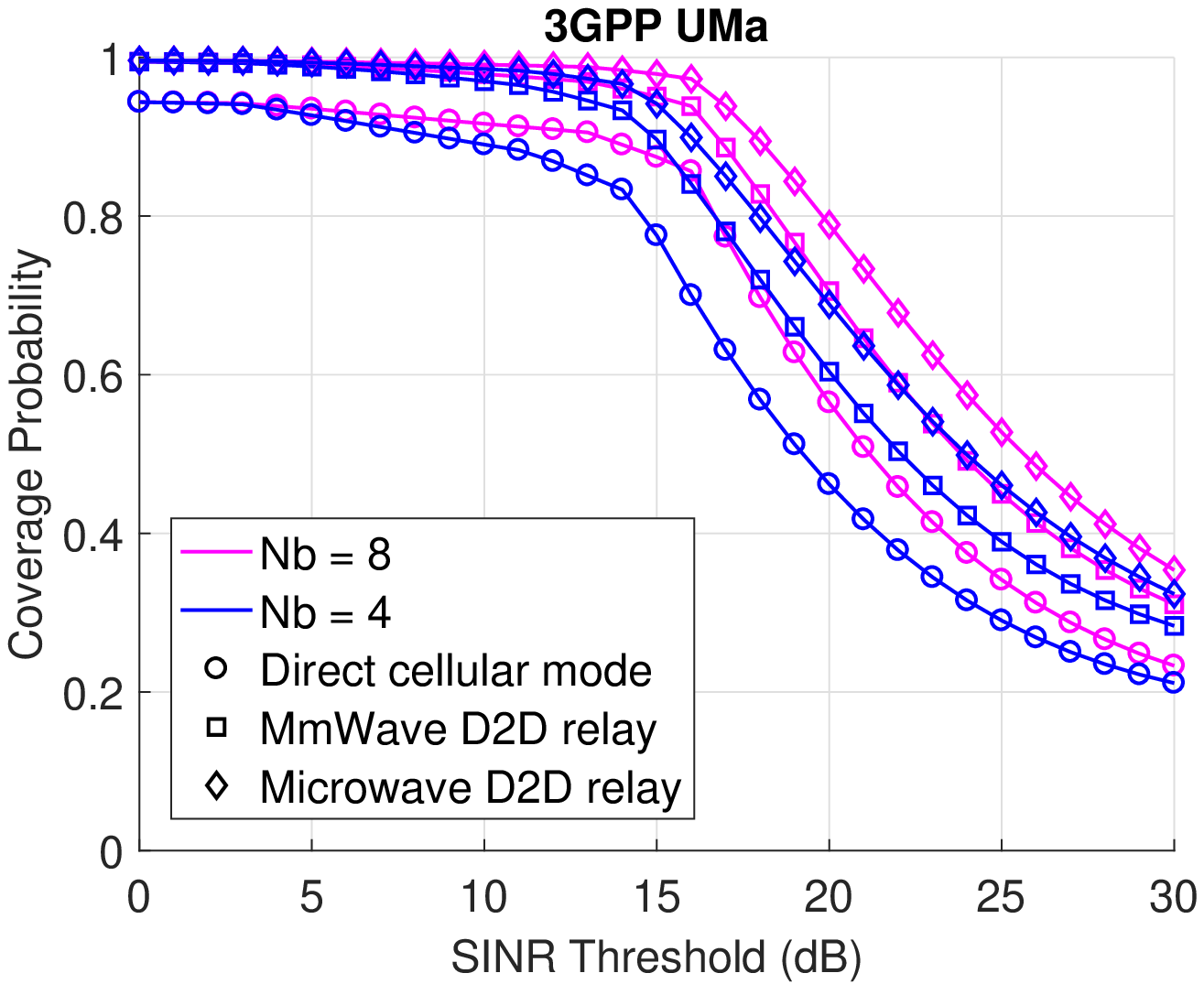}}  
  \subfloat[3GPP Ind (Analytical)]{\includegraphics[width=.4\textwidth]{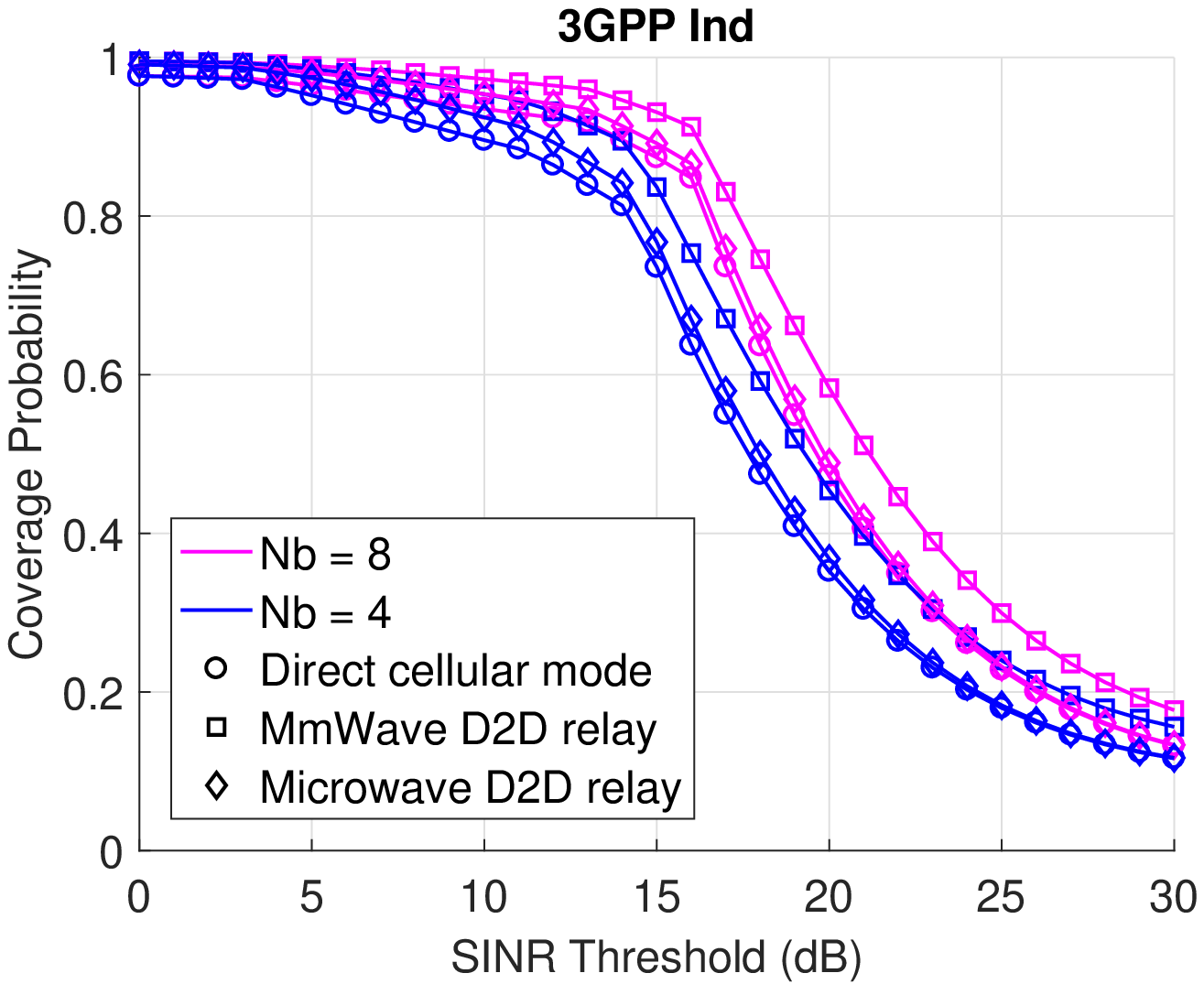}}
\caption{\label{fig:cov_comparison_3gpp} Coverage improvement of a D2D relay-assisted mmWave cellular network.}
\end{figure}

\subsubsection{Coverage vs. BS/Obstacle Density} 
The coverage probabilities for a D2D relay-assisted mmWave cellular network, obtained analytically as above, are shown in Fig.~\ref{fig:simu_bs_density} with respect to the BS intensity. 
Note that we fix candidate relay UE intensities in a given scenario. We observe that D2D relays improve the coverage for all BS densities for the selected SINR threshold $\tau = 10$ dB. We also observe that, given a specific UE intensity, there is an optimal BS deployment density that increases with the obstacle cover ratio.

\begin{figure}
\centering
  \subfloat[Urban Macro (Analytical)]{\includegraphics[width=.4\textwidth]{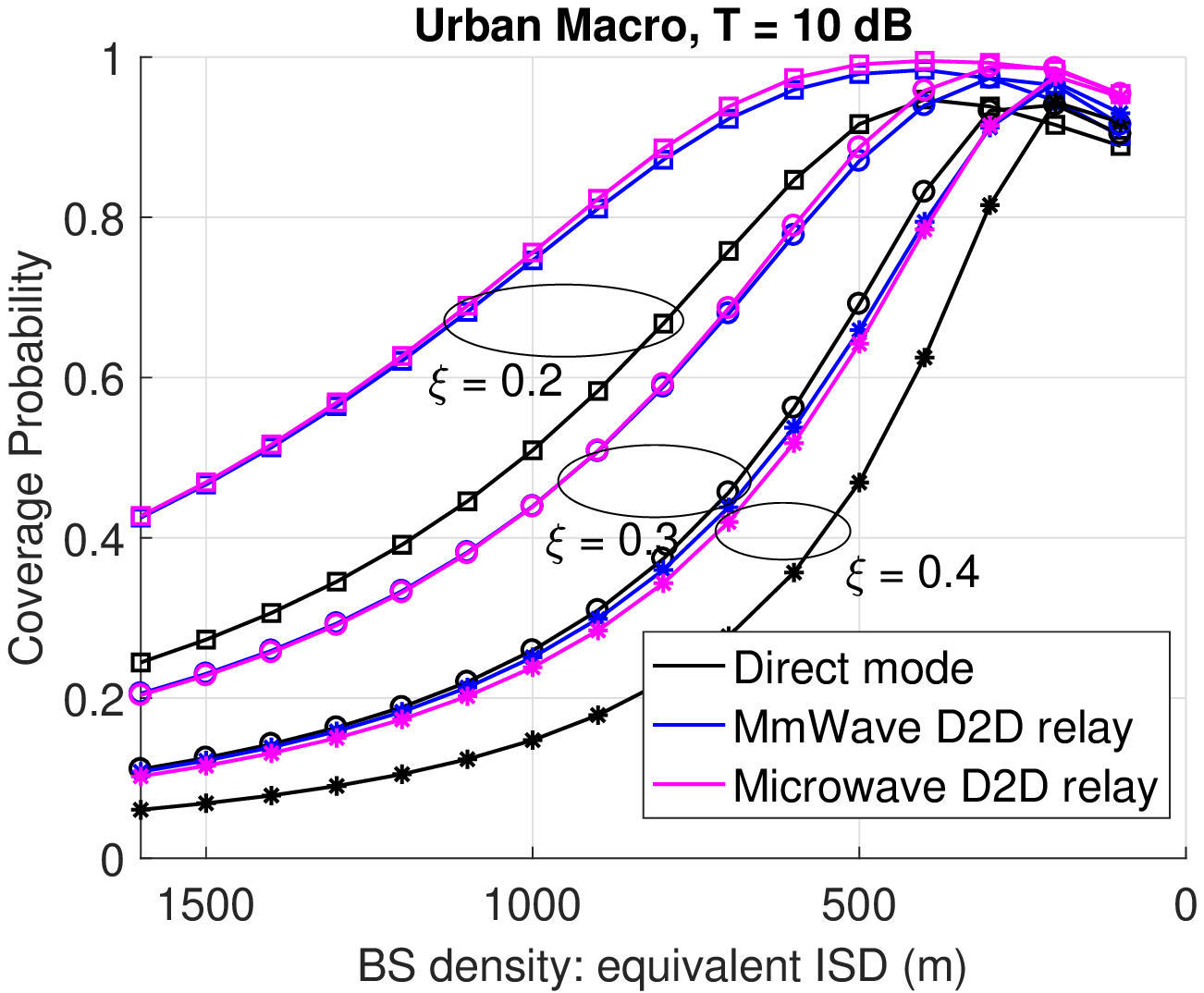}} 
  \subfloat[Indoor Office (Analytical)]{\includegraphics[width=.4\textwidth]{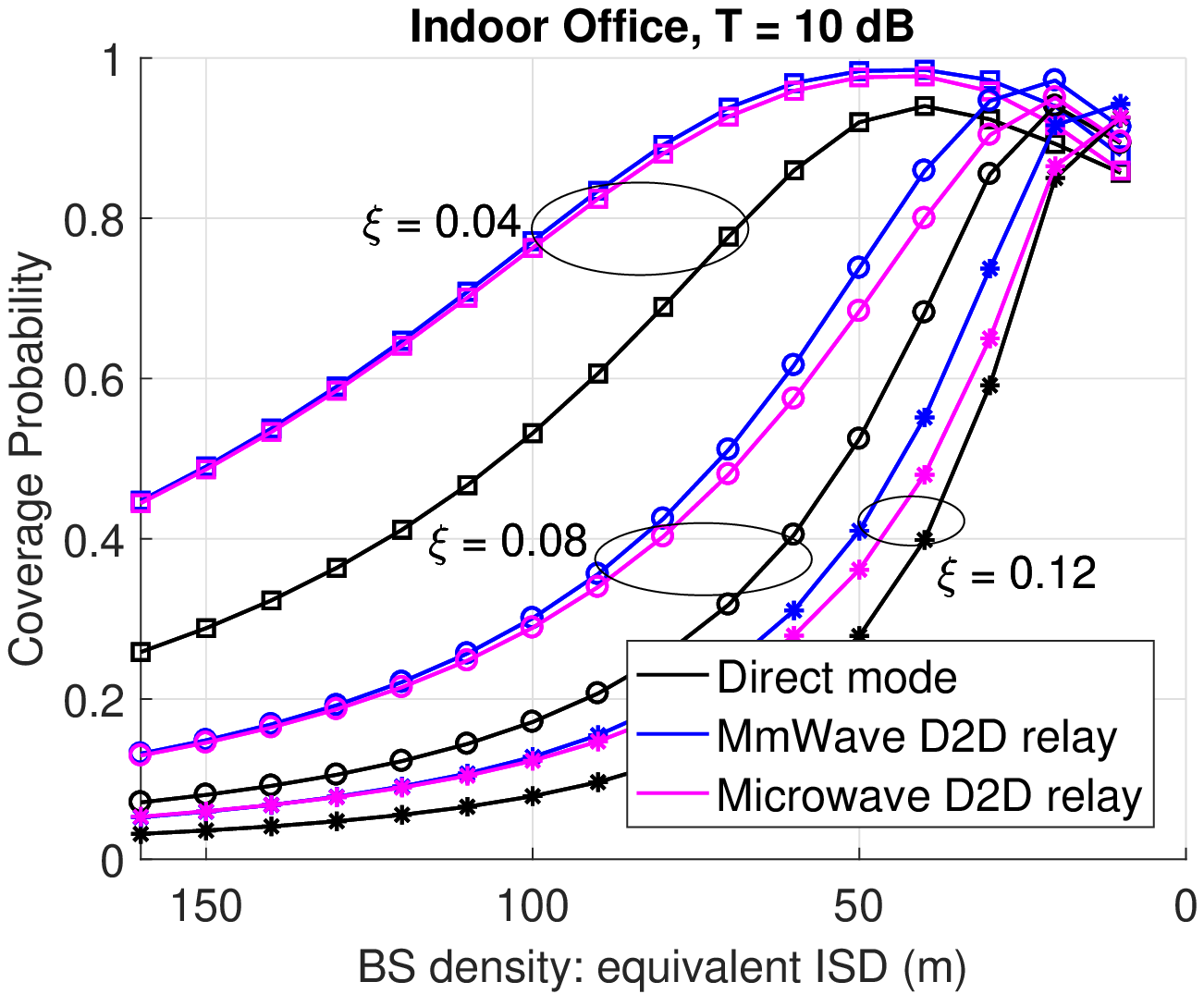}}  \caption{\label{fig:simu_bs_density} Coverage improvement in a relay-assisted mmWave cellular network. Note that there is an optimal BS deployment density that increases with the obstacle coverage ratio.}
\end{figure}

\subsection{Downlink Spectral Efficiency and the Effect of D2D on the Cellular Uplink}
\label{sec:simulation_se}
The SE is defined in \eqref{eq:cellular_se} and \eqref{eq:overall_se} and is shown in Fig.~\ref{fig:average_se}(a). We can see that the downlink SE is improved by both mmWave D2D relays and microwave D2D relays in the UMa and Ind scenarios compared to the cellular downlink without relaying. 
Moreover, it is clear that the SE improvement depends on the relaying threshold. For example, for microwave D2D relays in the UMa scenario, the SE is maximized when $\tau=21$ dB, which corresponds to a $16.3\%$ improvement over the cellular downlink without relaying. Note that, in a wireless communication system, the maximum exploitable SINR $\tau_\text{max}$ is limited by the system implementation, e.g., the available modulation schemes. We set $\tau_\text{max}=40$ dB.

Fig.~\ref{fig:average_se}(b) shows the uplink resources used for D2D transmissions, normalized by the uplink bandwidth $w_\text{ul}$, i.e., $\Upsilon(\tau)/w_\text{ul}$, where $\Upsilon(\tau)$ is defined in~\eqref{eq:d2d_resource}. We can see that the uplink resource fraction required to support D2D relaying is less for mmWave D2D links than for microwave D2D links. This is due to the difference in bandwidths available in the mmWave and microwave spectrum. In the UMa scenario, more than $100\%$ of microwave uplink spectrum is required to support the SE shown in Fig.~\ref{fig:average_se}(a) when $\tau$ is approximately $21$ dB. This is due to (i) the asymmetry in the bandwidths available on the mmWave cellular and the microwave D2D links, which are required to carry the same traffic volume, and (ii) the fact that D2D relay transmissions are more frequent over microwave D2D links because of their higher coverage probabilities in the UMa scenario. Unfortunately, this means that the SE improvement shown in Fig.~\ref{fig:average_se}(a) for microwave D2D relays in the UMa scenario is not achievable at some relaying thresholds. We conclude that mmWave uplink spectrum is better suited for carrying the D2D traffic.

Lastly, we would like to highlight the fact that our SINR threshold-based mode selection strategy is designed for coverage enhancement rather than spectral efficiency enhancement. As such, we believe that the combined coverage and spectral efficiency improvements afforded by mmWave D2D relaying outweigh the cost of the required uplink resources.

\begin{figure}
\centering
  \subfloat[Average Spectral Efficiency]{\includegraphics[width=.4\textwidth]{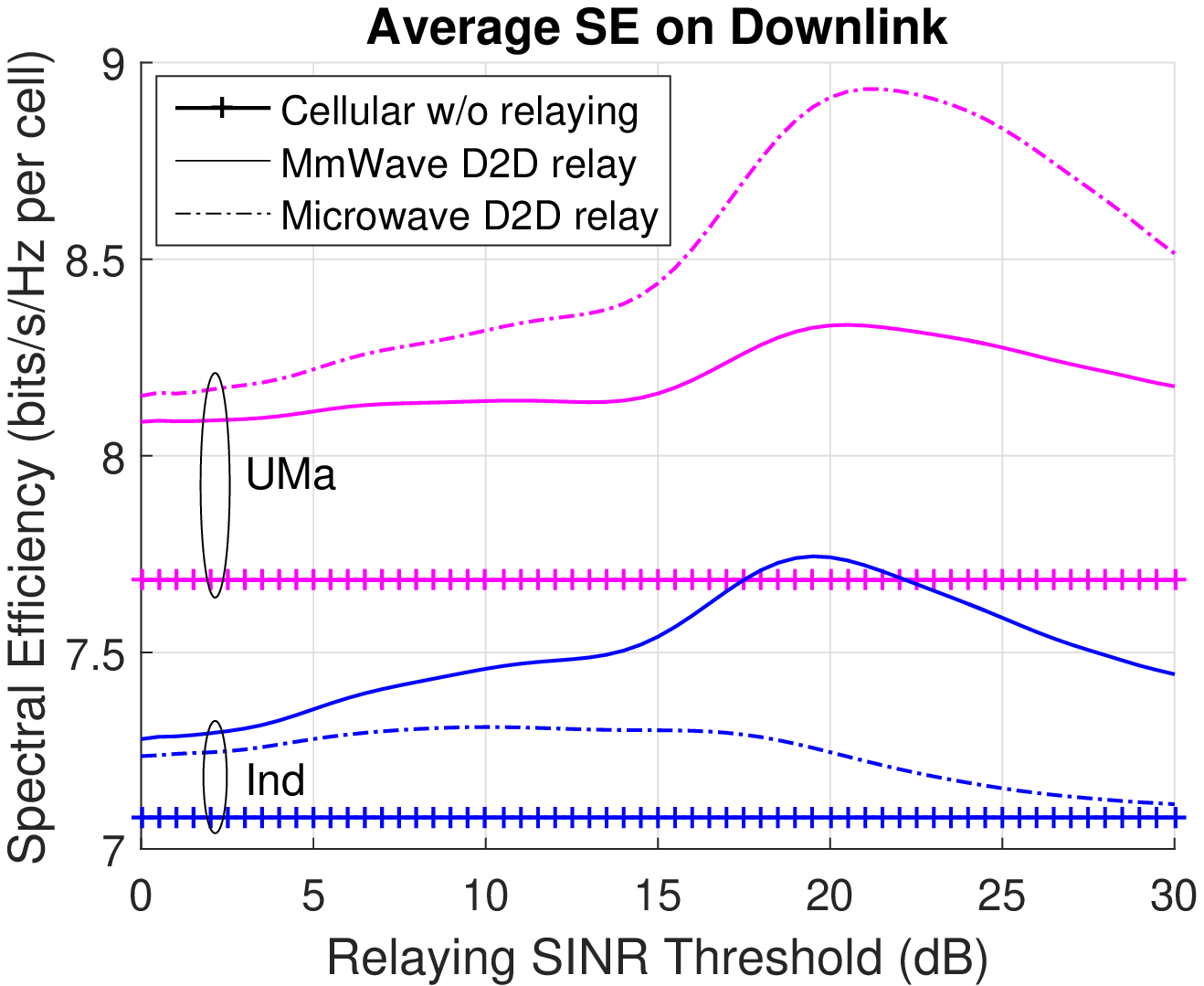}} 
  \subfloat[D2D link resource]{\includegraphics[width=.4\textwidth]{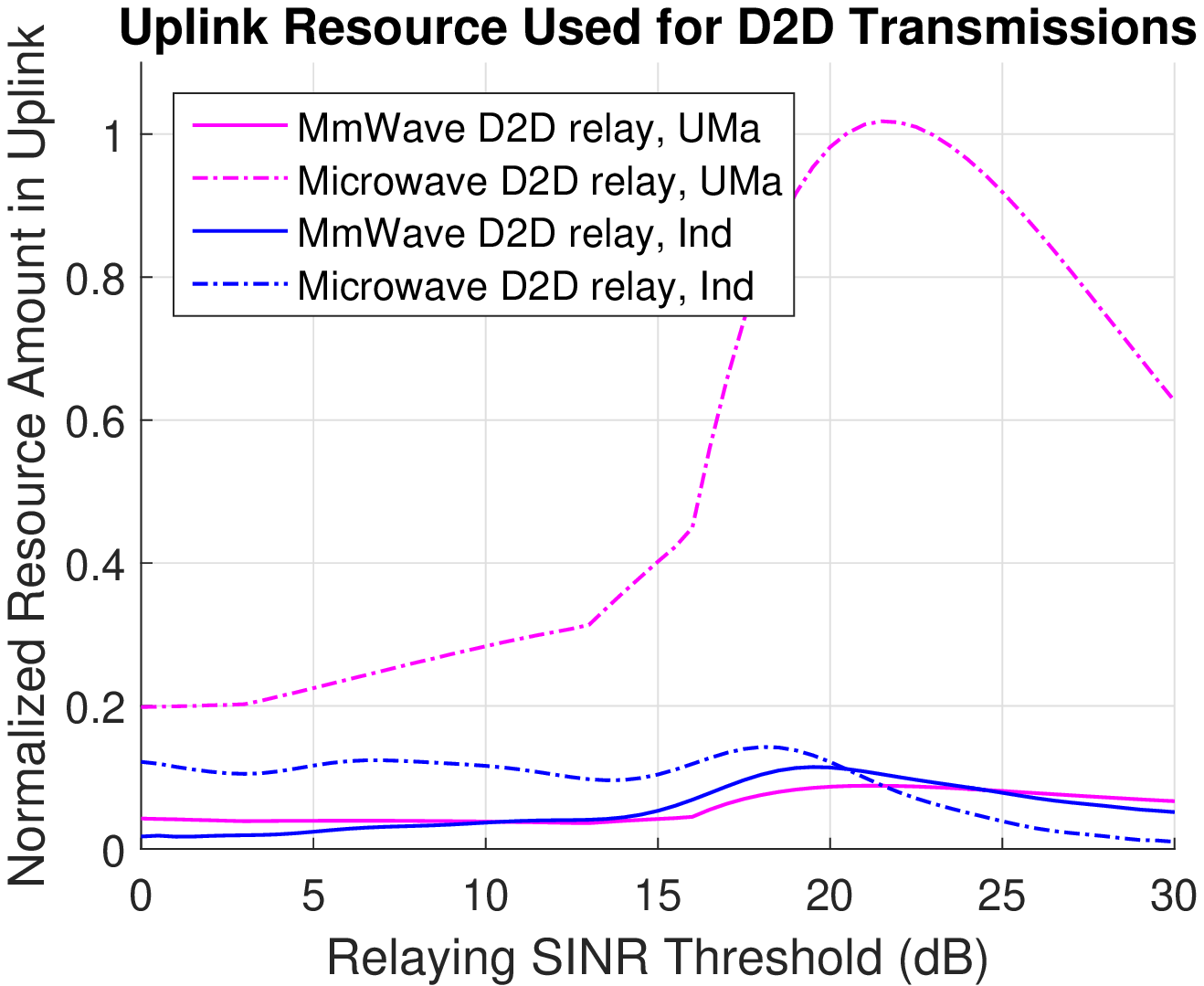}}
\caption{\label{fig:average_se} Downlink SE improvement and uplink resource used for D2D transmissions.}
\end{figure}

\section{Conclusion}
\label{sec:conclusion}

We envision mmWave cellular networks in which D2D relays are used to route around blockages. Using stochastic geometry, we derived coverage probability and spectral efficiency models for the downlink of D2D relay-assisted mmWave cellular networks under the assumption that D2D transmissions are employed on uplink spectrum. Our analytical and simulation results provide numerous important insights on the performance of mmWave cellular links, mmWave D2D links, microwave D2D links, and D2D relay-assisted mmWave cellular networks:

\begin{itemize}
\item Dominant interferer analysis provides a reasonably tight upper bound on the coverage probabilities obtained by PPP-based BS models for mmWave links, and this upper bound is tighter for higher obstacle densities. Dominant interferer analysis becomes inaccurate at extremely high BS densities/low ISDs, particularly when the obstacle density is low.
\item Obstacles play an important role in mitigating interference on mmWave links. On mmWave links, higher obstacle densities lead to higher coverage probabilities at high SINR thresholds. 
\item MmWave cellular links are noise-limited at light to moderate BS densities, but become interference-limited at higher BS densities. The BS density at which they become interference-limited increases with the obstacle density.
\item The BS density that maximizes the coverage probability increases with the obstacle density in both mmWave cellular networks and relay-assisted mmWave cellular networks.  
\item MmWave and microwave D2D relays improve the downlink coverage probability and SE in various scenarios across a wide range of BS densities and SINR thresholds.
\item Deploying D2D relays on mmWave uplink spectrum has a smaller impact on the cellular uplink resources than deploying them on microwave uplink spectrum.
\end{itemize}

Future work should investigate the benefits of D2D relay-assisted communications on the uplink of mmWave cellular networks. This will be  challenging because, assuming D2D links reuse uplink spectrum, the interference on the cellular and D2D links becomes correlated.

\appendices

\section{Proof of Theorem~\ref{theorem:cov_mmwave_cellular}}
\label{app:proof_cov_mmwave_cellular}
To investigate the outage caused by interference, we need to remove BSs that do not have LOS path to the typical receiver by $p(\boldsymbol{x})$-thinning: a point $\boldsymbol{x}$ is removed from a parent PPP with probability $1-p(\boldsymbol{x})$ \cite{chiu:stochastic_book}. For interferers in FIR, moreover, the retained LOS interferers shall be thinned further since only LOS interferers with main lobe towards the typical receiver matter.

By thinning theory, given the distance $d_0$ between the typical UE and its nearest LOS BS, and SINR threshold $\tau$, the average number of dominant LOS BS falling into the NIR is
\vspace{-8pt}
\begin{equation*}
\begin{split}
\Lambda_\text{b}^{\text{(N)}}(\tau, d_0) =& \int_{\text{NIR}} p(\boldsymbol{x}) \lambda_\text{b} (\mathrm{d} \boldsymbol{x}) \\
=& \int_0^{\phi_\text{u}} \int_{d_0}^{D_3} p_{\text{L}}(x)  \lambda_\text{b} x \mathrm{d}x \mathrm{d}\theta + \int_{\phi_\text{u}}^{2\pi} \int_{d_0}^{D_1} p_{\text{L}}(x)  \lambda_\text{b} x \mathrm{d}x \mathrm{d}\theta \\
=& \frac{ \lambda_\text{b} c}{\beta^2} \Big( \phi_\text{u} \big( \beta d_0 e^{-\beta d_0} - \beta D_3 e^{-\beta D_3} + e^{-\beta d_0} - e^{-\beta D_3} \big) \\
& + (2\pi -\phi_\text{u}) \big( \beta d_0 e^{-\beta d_0} - \beta D_1 e^{-\beta D_1} + e^{-\beta d_0} - e^{-\beta D_1} \big) \Big).\\
\end{split}
\vspace{-8pt}
\end{equation*}
Average number of dominant LOS BS falling into the FIR can be obtained similarly. Note that for FIR, only the BSs with boresight towards typical UE will be retained, i.e. a LOS BS in FIR will be retained as a dominant interferer with probability $\frac{\phi_\text{b}}{2\pi}$. We have
\begin{equation*}
\begin{split}
\Lambda_\text{b}^{\text{(F)}}(\tau, d_0) =& \frac{\phi_\text{b}}{2\pi} \int_{\text{FIR}} p(\boldsymbol{x}) \lambda_\text{b} (\mathrm{d} \boldsymbol{x}) \\
=& \frac{\phi_\text{b}}{2\pi} \Big( \int_0^{\phi_\text{u}} \int_{D_3}^{D_4} p_{\text{L}}(x)  \lambda_\text{b} x \mathrm{d}x \mathrm{d}\theta + \int_{\phi_\text{u}}^{2\pi} \int_{D_1}^{D_2} p_{\text{L}}(x)  \lambda_\text{b} x \mathrm{d}x \mathrm{d}\theta \Big)\\
=& \frac{ \phi_\text{b} \lambda_\text{b} c}{2\pi \beta^2} \Big( \phi_\text{u} \big( \beta D_3 e^{-\beta D_3} - \beta D_4 e^{-\beta D_4} + e^{-\beta D_3} - e^{-\beta D_4} \big) \\
& + (2\pi -\phi_\text{u}) \big( \beta D_1 e^{-\beta D_1} - \beta D_2 e^{-\beta D_2} + e^{-\beta D_1} - e^{-\beta D_2} \big) \Big).\\
\end{split}
\end{equation*}

Obviously, the coverage probability by dominant interferer analysis is the null probability, i.e. no retained interferer (BS) falls into both the NIR and FIR. For given outage SINR threshold $\tau$, we have $p_\text{c}(\tau,d_0) = e^{-\Lambda_\text{b}^{\text{(N)}}(\tau, d_0) -\Lambda_\text{b}^{\text{(F)}}(\tau, d_0)}$, 
the coverage probability for a typical UE is
\begin{equation*}
\begin{split}
p_\text{c}(\tau) =& \mathbb{E}_{d_0} \left[ p_\text{c}(\tau,d_0) \right] = \int_{x>0} p_\text{c}(\tau,x) f_{d_0}(x) \mathrm{d}x \\
=& 2\pi\lambda_\text{b} c \int_{x>0} x e^{ -\Lambda_\text{b}^{\text{(N)}}(\tau, x) -\Lambda_\text{b}^{\text{(F)}}(\tau, x) -\Lambda(x,\lambda_\text{b}) -\beta x}\mathrm{d}x.
\end{split}
\end{equation*}

\section{Proof of Theorem~\ref{theorem:cov_microwave_d2d}}
\label{app:proof_cov_microwave_d2d}

Denote $d_{0,\text{N}}$ and $d_{0,\text{L}}$ the distances from the typical UE to its nearest NLOS candidate relay UE and nearest LOS candidate relay UE. The nearest NLOS candidate relay UE will be selected as relay if $A_\text{L}^{-1} d_{0,\text{L}}^{-\alpha_\text{L}} < A_\text{N}^{-1} d_{0,\text{N}}^{-\alpha_\text{N}}$. For $\lambda_\text{r} > 0$, we have
\begin{equation}
\begin{split}
\mathbb{P}(A_\text{L}^{-1} d_{0,\text{L}}^{-\alpha_\text{L}} <& A_\text{N}^{-1} d_{0,\text{N}}^{-\alpha_\text{N}}) = \mathbb{P}(d_{0,\text{N}} < \tilde{a} d_{0,\text{L}}^{\alpha_\text{L}/\alpha_\text{N}}) \\
=& \int_0^\infty f_{d_{0,\text{L}}}(x) \int_0^{\tilde{a} x^{\alpha_\text{L}/\alpha_\text{N}}} f_{d_{0,\text{N}}}(y) \mathrm{d}y \mathrm{d}x \\
=& \int_0^\infty f_{d_{0,\text{L}}}(x) F_{d_{0,\text{N}}}(\tilde{a} x^{ \frac{\alpha_\text{L}}{\alpha_\text{N}} }) \mathrm{d}x \\
=& 1 - \int_0^\infty f_{d_{0\text{r,L}}}(x) e^{-\pi \lambda_\text{r} \tilde{a}^2 x^{2 \frac{\alpha_\text{L}}{\alpha_\text{N}}} +\Lambda( \tilde{a} x^{\frac{\alpha_\text{L}}{\alpha_\text{N}}}, \lambda_\text{r})} \mathrm{d}x \\
=& 1 - 2\pi\lambda_\text{r} c \int_0^\infty x e^{-\Lambda(x,\lambda_\text{r}) -\beta x -\pi \lambda_\text{r} \tilde{a}^2 x^{2 \frac{\alpha_\text{L}}{\alpha_\text{N}}} +\Lambda(\tilde{a} x^{\frac{\alpha_\text{L}}{\alpha_\text{N}}}, \lambda_\text{r})} \mathrm{d}x \triangleq S_\text{N},
\end{split}
\end{equation}
where $\Lambda(d,\lambda)$ is given as \eqref{eq:lambda_big_generic}. $S_\text{N}$ here is defined as the probability that UE associates to a LOS candidate relay UE. Similarly we can obtain $S_\text{L}$, the probability that UE associates to a NLOS candidate relay UE, and $S_\text{L} + S_\text{N} = 1$.

The coverage probability of a D2D link, given that the D2D link is NLOS (or, UE associates to the nearest NLOS candidate relay UE) and has length $d_{0,\text{N}}$, is 
\begin{equation}
\begin{split}
p_\text{c,N}(\tau,d_{0,\text{N}}) =& \mathbb{P}[\frac{h A_\text{N}^{-1} d_{0,\text{N}}^{-\alpha_\text{N}}}{\sigma^2 + I}>\tau ] \\
=& \mathbb{E}_{I} \big[ \mathbb{P}[h > \tau A_\text{N} d_{0,\text{N}}^{\alpha_\text{N}}(\sigma^2 + I) \big|I] \big] \\
\stackrel{(a)}{=}& \mathbb{E}_{I}[e^{-\mu\tau A_\text{N} d_{0,\text{N}}^{\alpha_\text{N}}(\sigma^2+I)} ] = e^{-\mu\tau A_\text{N} d_{0,\text{N}}^{\alpha_\text{N}}\sigma^2} \mathcal{L}_{I} (\mu \tau A_\text{N} d_{0,\text{N}}^{\alpha_\text{N}}),
\end{split}
\end{equation}
where $(a)$ follows the fact that $h \sim \exp(\mu)$, $\mathcal{L}_{I}(\cdot)$ denotes Laplace transform of interference $I$. 
With Rayleigh fading interference, we have \cite{haenggi:stochastic_graph} $\mathcal{L}_{I}(s) = \exp \left( -\pi \rho \lambda_\text{b} s^{2/\alpha_\text{N}} \frac{2\pi/\alpha_\text{N}}{\sin(2\pi/\alpha_\text{N})} \right)$.
Note that the path loss exponent of interference links is assumed to be $\alpha_\text{N}$. Now we have
\begin{equation}
\begin{split}
p_\text{c,N}(\tau) =& \mathbb{E}_{d_{0,\text{N}}} \big[ p_\text{c,N}(\tau,d_{0,\text{N}}) \big] = \int_{x>0} p_\text{c,N}(\tau,x) f_{d_{0,\text{N}}} (x) \mathrm{d}x \\
=& 2\pi\lambda_\text{r} \int_0^\infty x (1 -c e^{-\beta x}) e^{\Lambda(x,\lambda_\text{r}) -\pi\lambda_\text{r} x^2 - \mu\tau\sigma^2 A_\text{N} x^{\alpha_\text{N}}} \mathcal{L}_{I}(\mu\tau A_\text{N} x^{\alpha_\text{N}}) \mathrm{d}x.
\end{split}
\end{equation}

When the nearest LOS candidate relay UE is selected as relay, i.e. $A_\text{L}^{-1} d_{0,\text{L}}^{-\alpha_\text{L}} \geq A_\text{N}^{-1} d_{0,\text{N}}^{-\alpha_\text{N}}$, similarly, we have $p_\text{c,L}(\tau,d_{0,\text{L}}) = e^{-\mu\tau A_\text{L} d_{0,\text{L}}^{\alpha_\text{L}}\sigma^2} \mathcal{L}_{I} (\mu \tau A_\text{L} d_{0,\text{L}}^{\alpha_\text{L}})$, 
and
\begin{equation}
\label{eq:coverage_micro_los_app}
\begin{split}
p_\text{c,L}(\tau) =& \mathbb{E}_{d_{0,\text{L}}} [p_\text{c,L}(\tau,d_{0,\text{L}})] = \int_{x>0} p_\text{c,L}(\tau,x) f_{d_{0,\text{L}}} (x) \mathrm{d}x \\
=& 2\pi \lambda_\text{r} c \int_{x>0} e^{- \Lambda(x, \lambda_\text{r}) -\beta x -\tau \sigma^2 A_\text{L} x^{\alpha_\text{L}}} \mathcal{L}_{I}(\tau A_\text{L} x^{\alpha_\text{L}}) \mathrm{d}x. \\
\end{split}
\end{equation}

By the law of total probability, the microwave D2D link coverage probability for given SINR threshold $\tau$ is $p_\text{c}(\tau) = S_\text{N} p_\text{c,N}(\tau) + S_\text{L} p_\text{c,L}(\tau)$. We thus complete the proof.

\ifCLASSOPTIONcaptionsoff
  \newpage
\fi

\end{document}